\PassOptionsToPackage{nameinlink}{cleveref}
\documentclass[cleveref, thm-restate]{lipics-v2021}

\usepackage{graphicx}
\usepackage{microtype}

\usepackage{cite}
\usepackage{amsmath, amsfonts, mathrsfs}

\newcommand{\F}{\mathcal{F}}

\newcommand{\EA}{\text{EA}}
\newcommand{\LD}{\text{LD}}
\newcommand{\LA}{\text{LA}}

\renewcommand{\a}{\text{arrival}}
\renewcommand{\d}{\text{departure}}
\renewcommand{\succ}{\text{succ}}
\newcommand{\pred}{\text{pred}}

\usepackage{makecell}

\title{Temporal queries for dynamic temporal forests}

\keywords{temporal graphs, temporal reachability, earliest arrival, latest departure, dynamic forests}

\begin{CCSXML}
<ccs2012>
   <concept>
       <concept_id>10003752.10003809.10010031</concept_id>
       <concept_desc>Theory of computation~Data structures design and analysis</concept_desc>
       <concept_significance>500</concept_significance>
       </concept>
   <concept>
       <concept_id>10003752.10003809.10003635.10010038</concept_id>
       <concept_desc>Theory of computation~Dynamic graph algorithms</concept_desc>
       <concept_significance>500</concept_significance>
       </concept>
 </ccs2012>
\end{CCSXML}

\ccsdesc[500]{Theory of computation~Data structures design and analysis}
\ccsdesc[500]{Theory of computation~Dynamic graph algorithms}

\author{Davide Bilò}{Department of Information Engineering, Computer Science, and Mathematics, University~of~L'Aquila, Italy}{davide.bilo@univaq.it}{https://orcid.org/0000-0003-3169-4300}{}

\author{Luciano Gualà}{Department of Enterprise Engineering, University of Rome ``Tor Vergata'', Italy}{guala@mat.uniroma2.it}{https://orcid.org/0000-0001-6976-5579}{}

\author{Stefano Leucci}{Department of Information Engineering, Computer Science, and Mathematics, University~of~L'Aquila, Italy}{stefano.leucci@univaq.it}{https://orcid.org/0000-0002-8848-7006}{}

\author{Guido Proietti}{Department of Information Engineering, Computer Science, and Mathematics, University~of~L'Aquila, Italy}{guido.proietti@univaq.it}{https://orcid.org/0000-0003-1009-5552}{}

\author{Alessandro Straziota}{Department of Enterprise Engineering, University of Rome ``Tor Vergata'', Italy}{alessandro.straziota@uniroma2.it}{https://orcid.org/0009-0008-4543-786X}{}

\authorrunning{D. Bilò, L. Gualà, S. Leucci, G. Proietti, and A. Straziota} 

\Copyright{Davide Bilò, Luciano Gualà, and Stefano Leucci, Guido Proietti, and Alessandro Straziota} 

\EventEditors{Juli\'{a}n Mestre and Anthony Wirth}
\EventNoEds{2}
\EventLongTitle{35th International Symposium on Algorithms and Computation (ISAAC 2024)}
\EventShortTitle{ISAAC 2024}
\EventAcronym{ISAAC}
\EventYear{2024}
\EventDate{December 8--11, 2024}
\EventLocation{Sydney, Australia}
\EventLogo{}
\SeriesVolume{322}
\ArticleNo{47}

\hideLIPIcs
\nolinenumbers
\begin{document}

\maketitle

\begin{abstract}
In a temporal forest each edge has an associated set of time labels that specify the time instants in which the edges are available. A temporal path from vertex $u$ to vertex $v$ in the forest is a selection of a label for each edge in the unique path from $u$ to $v$, assuming it exists, such that the labels selected for any two consecutive edges are non-decreasing.

We design linear-size data structures that maintain a temporal forest of rooted trees under addition and deletion of both edge labels and singleton vertices, insertion of root-to-node edges, and removal of edges with no labels.
Such data structures can answer temporal reachability, earliest arrival, and latest departure queries. All queries and updates are handled in polylogarithmic worst-case time.
Our results can be adapted to deal with latencies. More precisely, all the worst-case time bounds are asymptotically unaffected when latencies are uniform. For arbitrary latencies, the update time becomes amortized in the incremental case where only label additions and edge/singleton insertions are allowed as well as in the decremental case in which only label deletions and edge/singleton removals are allowed. 

To the best of our knowledge, the only previously known data structure supporting temporal reachability queries is due to Brito, Albertini, Casteigts, and Travençolo [Social Network Analysis and Mining, 2021], which can handle general temporal graphs, answers queries in logarithmic time in the worst case, but requires an amortized update time that is quadratic in the number of vertices, up to polylogarithmic factors.
\end{abstract}

\section{Introduction}

\emph{Temporal graphs}, a.k.a.\ time-varying graphs, are graphs in which each edge is only available in specific time instants. 
These graphs are crucial for modeling and analyzing a variety of complex systems such as social networks, communication networks, transportation systems, and biological networks, where relationships and interactions evolve. 

A simple and widely-adopted model of temporal graphs is the one of Kempe, Kleinberg, and Kumar~\cite{KempeKK02} in which a temporal graph is modeled as a graph $G$ where each edge has an assigned integral temporal label representing a time instant in which the edge is available. Temporal paths are time-respecting walks on $G$, i.e., walks in which any two consecutive traversed edges have non-decreasing labels.
This model naturally generalizes to the case in which an edge may be traversed in multiple time instants, i.e., when each edge of $G$ is associated with a set of integer labels, and
many other generalizations have also been proposed (see, e.g., \cite{Holme14}).

Due to their generality, many algorithmic problems regarding temporal graphs have been considered in the literature, including the computation of good temporal paths \cite{WuCHKLX14,CasteigtsHMZ21,AkridaMNRSZ20} w.r.t.\ several quality measures, sparsification \cite{KempeKK02,CasteigtsPS21,BiloDG0R22,BiloDGLR24,CasteigtsRRZ24}, exploration \cite{ErlebachKLSS19,ErlebachS23}, a temporal version of the classical vertex cover problem \cite{AkridaMSZ20,HammKMS22}, and network creation games \cite{BiloC0GKLS23}, to mention~a~few. 
Perhaps surprisingly, the problem of designing dynamic data structures maintaining information on temporal connectivity has only been addressed quite recently, despite the vast amount of literature for the non-temporal case (see, e.g., \cite{HuangHKPT23,HanauerHS22} and the references therein). In this regard, Brito, Albertini, Casteigts, and Travençolo~\cite{Brito2021ADD} provide a data structure that can answer \emph{temporal reachability} queries in the \emph{incremental} setting, i.e., under the addition of new edges or of new labels to already existing edges. This data structure has size $O(n^2 \tau)$, and an amortized update time of $O(n^2 \log \tau)$, where $n$ is the number of vertices of the temporal graph and $\tau$ is its \emph{lifetime}, i.e., the number of distinct labels. 
Given two vertices $u,v$ and two time instants $t,t'$, it can report whether there exists a temporal path from $u$ to $v$ that departs from $u$ no earlier than $t$ and arrives in $v$ no later than $t'$. The worst-case time required by such a query is $O(\log \tau)$ and the corresponding path can be retrieved in $O(\log \tau)$ worst-case time per edge.

In \cite{BarjonCCJN14} the authors consider the problem of maintaining a directed graph in which edge $(u,v)$ exists if there is a temporal path from $u$ to $v$ in $G$, and show how to efficiently update the \emph{time-transitive closure} in the restricted \emph{chronological incremental} case, where the inserted labels must be monotonically non decreasing.

\subparagraph*{Our results.}
We focus on the case in which the temporal graph of interest is acyclic, i.e., it is a temporal forest $F$, and we design data structures that support insertions and deletions of edge labels and can answer temporal reachability queries along with the more general \emph{earliest arrival} (EA) and \emph{latest departure} (LD) queries. Given two distinct vertices $u$, $v$ and time instant $t$, an EA (resp.\ LD) query reports the smallest arrival time (reps.\ largest departure time) among those of all temporal paths from $u$ to $v$ that depart no earlier than $t$ (resp.\ arrive no later than $t$).

Our main data structure is dynamic also w.r.t.\ a second aspect. In addition of being able to add and remove labels from the edges of $F$, it also supports the addition and deletion of singleton vertices along with \emph{link} and \emph{cut} operations in a temporal forest of rooted trees. The link operation adds a new edge $(u,v)$ with label $\ell$ in $F$ from the root $u$ of any tree to any other vertex $v$ in a different tree, thus merging the two trees into one.
The cut operation deletes the last label of an edge $e$ and removes $e$ from $F$, thus splitting the tree $T$ that contained $e$ into two trees $T_1, T_2$, where the root of $T_1$ coincides with that of $T$ and $T_2$ is rooted in the unique endvertex of $e$ in $T_2$. 

Our data structure has linear size, and can be updated in $O(\log M)$ worst-case time per operation, where $M$ is the total number of (not necessarily distinct) labels in the forest at the time of the operation. EA and LD queries require $O(\log L \cdot \log M)$ worst-case time, where $L$ is the maximum number of labels on a single edge at the time of the operation, and a temporal reachability query can be answered in $O(\log M)$ worst-case time (see \Cref{thm:fully_dynamic_forest}).  

In the special case of temporal paths with fixed topology (i.e., when only insertions and deletions of edge labels are allowed), we can improve the worst-case time complexity of EA and LD queries to $O(\log M)$, thus shaving a $O(\log L)$ factor (see \Cref{thm:static_path}).

Our data structure can also be adapted to the more general case of temporal forests with \emph{latencies}, in which the labels on a generic edge are pairs $(\ell, d)$, with $d\geq 0$, encoding the fact that the edge can be traversed at time $\ell$ from any endvertex to reach the other endvertex at time $\ell + d$.
The data structure retains the same worst-case asymptotic guarantees when latencies are \emph{uniform} (see \Cref{cor:uniform_latencies}).
For arbitrary latencies, we consider both the \emph{incremental} and the \emph{decremental} scenarios.
In the incremental scenario we support the addition of new labels to existing edges, the addition of singleton vertices, and the above link operation, while the decremental scenario only involves label/singleton deletions and cut operations.
In both cases, the above $O(\log M)$ upper bound on the update time becomes amortized (see \Cref{thm:incremental_decremantal_latencies}).

\section{Preliminaries}
A \emph{temporal forest} is an undirected forest $F = (V(F),E(F))$ paired with a function $\lambda: E(F) \to \mathscr{P}(\mathbb{Z})$ that associates a set $\lambda(e) \subseteq \mathbb{Z}$ of labels to each edge $e \in E(F)$.
A \emph{temporal path} $\pi$ from vertex $u\in V(F)$ to vertex $v \in V(F) \setminus \{u\}$ in $F$ is a sequence of pairs $\langle (e_1, \ell_1),  (e_2, \ell_2), \dots, (e_k, \ell_k) \rangle$ such that $\langle e_1, e_2, \dots, e_k \rangle$ is a path from $u$ to $v$  in $F$, $\ell_i \in \lambda(e_i)$ for all $i=1,\dots,k$, and $\ell_i \le \ell_{i+1}$ for all $i=1, \dots, k-1$.
The departure time $\d (\pi)$ and the arrival time $\a (\pi)$ of $\pi$ are $\ell_1$ and $\ell_k$, respectively.

Given two distinct vertices $u, v \in V(F)$, and $t_a,t_d \in \mathbb{Z} \cup \{-\infty, +\infty\}$,  $\Pi_{F}(u, v, t_d, t_a)$ is the set of all temporal paths from $u$ to $v$ in $F$ having a departure time of at least $t_d$ and an arrival time of at most $t_a$. 
For a time $t \in \mathbb{Z} \cup \{-\infty\}$, the \emph{earliest arrival time} of a temporal path from $u$ to $v$ with departure time at least $t$ is denoted by $\EA(u, v, t)$ and is defined as  $\min_{\pi \in \Pi(u, v, t, +\infty)} \a(\pi)$. If no temporal path from $u$ to $v$ with departure time at least $t$ exists in $F$, i.e., if $\Pi(u, v, t, +\infty) = \emptyset$, we define $\EA(u,v,t) = +\infty$. A path $\pi \in \Pi(u,v,t,+\infty)$ such that $\a(\pi)=\EA(u,v,t)$ is called an \emph{earliest arrival path}.

Similarly, we denote the \emph{latest departure time} of a temporal path from $u$ to $v$ having an arrival time of at most $t \in \mathbb{Z} \cup \{+\infty\}$ as $\LD(u, v, t) =  \max_{\pi \in \Pi(u, v, -\infty, t)} \d(\pi)$.  If $\Pi(u, v, -\infty, t) = \emptyset$, then we let $\LD(u,v,t) = -\infty$. A path $\pi \in \Pi(u,v,t,+\infty)$ such that $\a(\pi)=\LD(u,v,t)$ is called a \emph{latest departure path}. As an edge case, we define $\EA(u, u, t) = \LD(u, u, t) = t$ for all $u \in V(F)$.

We design a data structure for maintaining information on a dynamic temporal forest of rooted trees. The data structure must efficiently answer to the following queries: 

\begin{description}
    \item[Earliest arrival time (EA query):] Given two vertices $u,v \in V(F)$ with $u \neq v$, and a time $t \in \mathbb{Z} \cup \{-\infty\}$, report $\EA(u,v,t)$;

    \item[Latest departure time (LD query):] Given two vertices $u,v \in V(F)$ with $u \neq v$, and a time $t \in \mathbb{Z} \cup \{+\infty\}$, report $\LD(u,v,t)$;

    \item[Temporal reachability:]
    Given two vertices $u,v \in V(F)$ with $u \neq v$, and two times $t_a, t_d \in \mathbb{Z} \in \{ -\infty, +\infty \}$ with $t_a \leq t_d$, report whether there exists a temporal path $\pi$ from $u$ to $v$ in $F$ with $\d(\pi) \geq t_a$ and $\a(\pi) \leq t_d$.
\end{description}

Clearly, a data structure that supports EA queries or LD queries also supports temporal reachability queries with the same worst-case query time. 
The update operations we consider are the following:

\begin{description}
    \item[Label addition:] Given an edge $e \in E(F)$ and a value $\ell \in \mathbb{Z} \setminus \lambda(e)$, add $\ell$ to $\lambda(e)$;

    \item[Label deletion:] Given an edge $e \in E(F)$ with $|\lambda(e)|\geq 2$ and some $\ell \in \lambda(e)$, delete $\ell$ from $\lambda(e)$;

    \item[Link:] Given two vertices $u,v$ where $u$ is the root of some tree $T$ in $F$ and $v$ is some vertex of a tree $T'$ of $F$ with $T' \neq T$, and a label $\ell \in \mathbb{Z}$, add the edge $(u,v)$ to $F$ and set $\lambda((u,v))=\{\ell\}$. The new tree resulting from merging $T$ with $T'$ retains the root of $T'$;

    \item[Cut:] Given an edge $e \in E(F)$ with $|\lambda(e)|=1$, remove the edge $e$ from the tree $T$ of $F$ containing $e$, thus splitting $T$ into two new trees $T_1, T_2$ where $T_1$ retains the root of $T$ and $T_2$ is rooted in the unique endvertex of $e$ in $T_2$; 

    \item[Singleton addition/deletion:] Add a new (resp.\ remove an existing) singleton vertex to (resp.\ from) $F$. 
\end{description}

If $u,v,w$ are vertices of $F$ such that $w$ lies on the unique path between $u$ and $v$ in $F$ we have  that
$\EA(u, v, t) = \EA(w, v, \EA(u, w, t))$ and $\LD(u, v, t) = \LD(u, w, \LD(w, v, t))$.

Let $-F$ be the temporal forest such that $V(F)=V(-F)$, $E(F)=E(-F)$, and for every $e \in E(F)$, the set of labels of $e$ in $-F$ is $\{-\ell \mid \ell \in \lambda(e)\}$. One can observe that:
\begin{lemma}\label{lemma:LD_eq_minusLA}
    The value of $\LD(u,v,t)$ in $F$ coincides with $- \EA(v,u,-t)$ in $-F$.
\end{lemma}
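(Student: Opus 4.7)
The plan is to establish a natural bijection between temporal paths from $u$ to $v$ in $F$ and temporal paths from $v$ to $u$ in $-F$, and then track how the departure and arrival times transform under this bijection.

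First, I would define the correspondence explicitly. Given a temporal path $\pi = \langle (e_1, \ell_1), \dots, (e_k, \ell_k) \rangle$ from $u$ to $v$ in $F$, associate to it the sequence $\pi' = \langle (e_k, -\ell_k), (e_{k-1}, -\ell_{k-1}), \dots, (e_1, -\ell_1) \rangle$ in $-F$. I would verify that $\pi'$ is a valid temporal path from $v$ to $u$ in $-F$: the underlying edge sequence is the reverse of $\langle e_1, \dots, e_k \rangle$, hence a path from $v$ to $u$ in the (undirected) forest; each $-\ell_i$ belongs to the label set of $e_i$ in $-F$ by definition; and the monotonicity $\ell_1 \leq \dots \leq \ell_k$ becomes $-\ell_k \leq -\ell_{k-1} \leq \dots \leq -\ell_1$, which is exactly the required non-decreasing property. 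The map $\pi \mapsto \pi'$ is clearly an involution (up to the sign flip on $F$ versus $-F$), so it is a bijection.

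Next, I would read off departure and arrival times: $\d(\pi') = -\ell_k = -\a(\pi)$ and $\a(\pi') = -\ell_1 = -\d(\pi)$. Consequently, the condition $\a(\pi) \leq t$ defining membership of $\pi$ in $\Pi_F(u,v,-\infty,t)$ is equivalent to $\d(\pi') \geq -t$, i.e., to $\pi' \in \Pi_{-F}(v,u,-t,+\infty)$. Therefore the bijection restricts to a bijection between these two sets of temporal paths.

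Using this restricted bijection, I compute
\[
\EA_{-F}(v,u,-t) \;=\; \min_{\pi' \in \Pi_{-F}(v,u,-t,+\infty)} \a(\pi') \;=\; \min_{\pi \in \Pi_F(u,v,-\infty,t)} (-\d(\pi)) \;=\; -\max_{\pi \in \Pi_F(u,v,-\infty,t)} \d(\pi) \;=\; -\LD_F(u,v,t),
\]
which rearranges to the claimed identity. Finally, I would handle the degenerate case separately: if $\Pi_F(u,v,-\infty,t) = \emptyset$, then by the bijection $\Pi_{-F}(v,u,-t,+\infty) = \emptyset$ as well, so $\LD_F(u,v,t) = -\infty$ and $\EA_{-F}(v,u,-t) = +\infty$, and the identity $\LD = -\EA$ still holds under the $\pm\infty$ conventions. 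There is no real obstacle here; the only thing one must be careful about is checking that the monotonicity condition genuinely reverses under negation, and that the sign conventions for $\pm\infty$ are consistent with the definitions given earlier in the preliminaries.
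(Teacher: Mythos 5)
Your proof is correct and rests on the same core idea as the paper's: the correspondence that reverses a temporal path and negates its labels, which swaps the roles of departure and arrival times between $F$ and $-F$. The paper phrases this as a proof by contradiction starting from an optimal latest-departure path, whereas you make the correspondence an explicit bijection between $\Pi_{F}(u,v,-\infty,t)$ and $\Pi_{-F}(v,u,-t,+\infty)$ and equate the optima directly, which also cleanly covers the empty-set case under the $\pm\infty$ conventions.
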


Finally, given a subset $X$ of a universe possessing a strict total order, and an element $x$ of the universe, we define the \emph{successor} $\succ(x,X)$  (resp. \emph{predecessor} $\pred(x,X)$) of $x$ w.r.t.\ $X$ as the minimum element $y \in X$ s.t.\ $y \geq x$ (resp. the maximum element $y \in X$ s.t.\ $y \leq x$). If such an element $y$ does not exist, $\succ(x, X)=+\infty$. 
Similarly, the \emph{strict successor} (resp. \emph{strict predecessor}) is defined as $\succ(x,X\setminus \{x\})$ (resp. $\pred(x,X\setminus \{x\})$).

\section{Warm up: a data structure for temporal paths}\label{sec:paths}

As a warm up, in this section we consider the simpler case in which the forest $F$ is actually a fixed path, and the only supported operations are label additions and deletions to/from the edges of $F$.
We use this section to introduce some of the ideas of our construction, which we generalize to the case of dynamic temporal forests in \Cref{sec:temporal_forests} that also contains the correctness proofs. More precisely, we establish the following result:

\begin{theorem}\label{thm:static_path}
    Given a temporal path with $n$ vertices, it is possible to build in $O(n+M \log L)$ time, a data structure of linear size supporting EA and LD queries under label insertions and deletions in $O(\log M)$ worst-case time per operation, where $L$ is the maximum number of labels on the same temporal edge, and $M$ is the total number of (not necessarily distinct) labels in the path at the time of the operation.
\end{theorem}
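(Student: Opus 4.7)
My plan is to reduce LD queries to EA queries via \Cref{lemma:LD_eq_minusLA}, maintaining a symmetric copy of the data structure on $-F$; the remainder of the argument then only needs to support EA queries and label insertions/deletions.

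Orienting the path as $v_0, v_1, \dots, v_{n-1}$ with edges $e_i = (v_i, v_{i+1})$, I will build a balanced binary tree $T$ (e.g.\ a weight-balanced tree) over the $n-1$ edges in path order, of depth $O(\log n)$. Each leaf $i$ stores a balanced BST $B_i$ over $\lambda(e_i)$ supporting $\succ$, insertion and deletion in $O(\log L)$. Each internal node $u$ covers a contiguous edge range $[l_u, r_u]$ and is augmented with $O(1)$ summary data about its earliest-arrival function $g_u(t) := \EA(v_{l_u}, v_{r_u+1}, t)$. The summary is required to satisfy two properties: first, the summary of $u$ is computable in $O(1)$ from the summaries of its two children (so that updates propagate along root-to-leaf paths in $O(\log n)$ total time); second, given a running time $t$, the summary alone suffices to either resolve $g_u(t)$ immediately or to decide into which of $u$'s two children the query must descend.

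Given these ingredients, an EA query $\EA(v_i, v_j, t)$ is answered by decomposing $[i, j-1]$ into $O(\log n)$ canonical subtree ranges and threading a running time, initialized to $t$, through the corresponding $g_u$'s from left to right. A label update on $e_k$ first modifies $B_k$ in $O(\log L)$ time and then walks up $T$, updating the $O(\log n)$ ancestral summaries in $O(1)$ each, for a total update cost of $O(\log n + \log L) = O(\log M)$. The total size is $O(n + M)$: $T$ has $O(n)$ nodes storing $O(1)$ data each, and the leaf BBSTs together hold the $O(M)$ labels. Construction consists of initializing $T$ in $O(n)$ time and inserting each of the $M$ labels into the appropriate leaf BBST in $O(\log L)$ time, matching the claimed $O(n + M \log L)$ bound.

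The hardest step will be to design the per-node summary so that the query cost is truly $O(\log M)$ rather than the na\"{\i}ve $O(\log n \cdot \log L)$ that a blunt ``one successor query per canonical node'' strategy would yield. The structural fact to exploit is that the running time is monotonically non-decreasing throughout the left-to-right sweep of canonical nodes, which means the sequence of successor queries across different leaf BBSTs is coherently directed. The plan is to store at each internal node only a constant number of boundary labels (e.g.\ the minimum label in $u$'s subtree together with information about how the minimum interacts with the sibling at each split), so that at each visited canonical node the jump is either resolved in $O(1)$ from the summary, or forces a descent into a uniquely determined child; amortizing along the $O(\log n)$ nodes of the query, one can then argue that at most a single leaf BBST is ever truly descended into, contributing a single $O(\log L)$ term and giving an overall query cost of $O(\log n + \log L) = O(\log M)$.
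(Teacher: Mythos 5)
There is a genuine gap, and it sits exactly where you flag ``the hardest step'': the constant-size per-node summary with your two required properties is not constructed, and there is good reason to believe it cannot exist in the form you describe. The function $g_u(t)=\EA(v_{l_u},v_{r_u+1},t)$ is a step function with up to one breakpoint per label in $u$'s range, and its value at a given $t$ is determined by a chain of successor computations, one per edge of the range, each depending on the outcome of the previous one. Constant boundary data (a minimum label, how it interacts with the sibling, etc.) does not determine this chain, and the composition $g_u = g_{\mathrm{right}}\circ g_{\mathrm{left}}$ cannot be summarized in $O(1)$ words from $O(1)$-word child summaries: already evaluating $g_{\mathrm{right}}$ at the single point $g_{\mathrm{left}}(-\infty)$ requires information about all labels in the right child's range. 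Consequently, when a query ``descends'' into a canonical node, it does not follow a single root-to-leaf path: evaluating $g_u$ at the current running time forces a left-to-right sweep over the edges in $u$'s range, with a fresh successor search at each edge, because the time fed into each edge is only known after the previous edge is resolved. Your amortization claim that ``at most a single leaf BBST is ever truly descended into'' is asserted, not argued, and fails on easy instances (take interleaved label sets on consecutive edges, so that every edge along the query range forces a nontrivial successor search); even the benign outcome of your scheme is the na\"ive $O(\log n\cdot\log L)$ bound, which is precisely what \Cref{thm:static_path} is meant to beat. The reduction of LD to EA via \Cref{lemma:LD_eq_minusLA}, the space bound, and the construction-time accounting are fine, but the core of the theorem is missing.

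For comparison, the paper avoids composing earliest-arrival functions at query time altogether. It maintains an auxiliary edge-weighted forest $\F$ with one node per (label, edge) pair, in which the parent of a node is its ``successor'': either the next larger label on the same edge (weight $0$) or the successor label on the next edge of the path (weight $1$). All cross-edge successor relationships are thus precomputed and stored as parent pointers, so $\EA(v_i,v_j,t)$ is answered by one successor search in the dictionary of $e_i$ plus a single weighted level-ancestor query on a top tree representing $\F$, giving $O(\log M)$. The price is paid at update time, and the key structural fact making this cheap is that a single label insertion or deletion changes the parent pointer of only a constant number of nodes of $\F$ (the new/deleted node, the predecessor label on the same edge, and the predecessor label on the preceding edge), each repaired by a constant-time number of cut/link operations; this is the content of the FixParent routine and of \Cref{lm:correct_updates}. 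If you want to salvage your segment-tree plan, you would need either to store the full breakpoint lists $g_u$ at internal nodes (which ruins the update time) or some device such as dynamic fractional cascading to make the iterated successor searches cheap, and neither yields the claimed bounds as stated; the pointer-based forest is the missing idea.
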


Fix an arbitrary endvertex $v_0$ of the path $F$, and let $v_i$ be the unique vertex at hop-distance $i$ from $v_0$ in $F$.
We think of $F$ as a tree rooted in $v_{n-1}$. With this interpretation, the main technical difficulty lies in designing a data structure capable of answering \emph{upward} EA queries, i.e., EA queries from some vertex $v_i$ to some vertex $v_j$ with $j > i$.
The cases of \emph{downward} EA queries can be managed by a similar data structure rooted in $v_0$. LD queries and reachability queries can be recast as EA queries as already discussed in the preliminaries and in \Cref{lemma:LD_eq_minusLA}.

The naive solution to represent all possible upward earliest arrival paths in $F$ is that of building a forest of rooted trees similar to the one shown on the top-left of \Cref{fig:enter-label}. This forest represents each label $\ell$ of each edge $e = (v_i, v_{i+1})$ in $F$ as a node whose parent corresponds to the first label of $(v_{i+1}, v_{i+2})$ that is larger than or equal to $\ell$ (if any).
Intuitively, moving upwards in $F$ along an earliest arrival path corresponds to moving upwards in a tree of such a forest. Unfortunately, explicitly maintaining this forest would be costly since even a single label insertion/deletion could cause a large number of edges to be rewired. 
We circumvent this problem by maintaining a different edge-weighted forest $\F(\lambda)$ that still represents all possible upward earliest arrival paths while guaranteeing that each node has constant degree. 

Let $e_i$ be the edge $(v_i, v_{i+1})$.  
Our forest $\F(\lambda)$ contains a \emph{node} $(\ell, i)$ for each edge $e_i$ in $F$, and for each $\ell \in \lambda(e_i)$.
To aid readability we use the term \emph{vertex} to refer to vertices in $F$, and the term \emph{node} for those in $\F(\lambda)$. To describe the edges of $\F(\lambda)$, we first define a partial function $\sigma_i(\ell, \lambda)$ that maps the labels $\ell \in \lambda(e_i)$ to a ``successor'' node, which is either the one corresponding to the smallest label strictly larger than $\ell$ on the same edge $e_i$, or the one corresponding to the successor of $\ell$ among the labels of edge $e_{i+1}$.
More precisely, given  $i \in \{0, \dots, n-2\}$ and $\ell \in \lambda(e_i)$, we consider $\ell^+ = \succ(\ell+1, \lambda(e_i) )$. For $i=n-2$, $\sigma_i(\ell, \lambda)$ is defined as $(e_i, \ell^+)$ if $\ell^+ < +\infty$, and is undefined if $\ell^+ = +\infty$.
For $i \in \{0, \dots, n-3\}$, we compare $\ell^+$ with $\ell' = \succ(\ell, \lambda(e_{i+1}))$. 
If $\ell^+ = \ell' = +\infty$, then $\sigma_i(\ell, \lambda)$ is undefined. 
Otherwise
$\sigma_i(\ell, \lambda) = 
\begin{cases}
(\ell^+, i) & \text{if } \ell^+ \le \ell'; \\
(\ell', i+1) & \text{if } \ell' < \ell^+.
\end{cases}$

All the nodes $(\ell, i)$ for which $\sigma_i(\ell, \lambda)$ is undefined are the roots of their respective trees in $\F(\lambda)$. 
Whenever $\lambda$ is clear from context we write $\F$ in place of $\F(\lambda)$ and $\sigma_i(\ell)$ in place of $\sigma_i(\ell, \lambda)$. See \Cref{fig:enter-label} (top-right) for an example.

The weights of all edges of the form $( (\cdot, i), (\cdot, i+1) )$ are set to $1$, while those of the remaining edges, of the form $((\cdot, i), (\cdot, i))$, are set to $0$.
Our definition of $\sigma_i$ ensures that different children of the same node $v$ in $\F$ must be linked to $v$ using edges of different weights. As a consequence, each vertex has at most $2$ children in $\F$. 

\begin{figure}[t]
    \centering
    
    \includegraphics[width=0.45\textwidth]{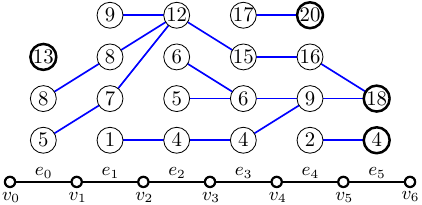}
    \hspace{0.05\textwidth}
    \includegraphics[width=0.45\textwidth]{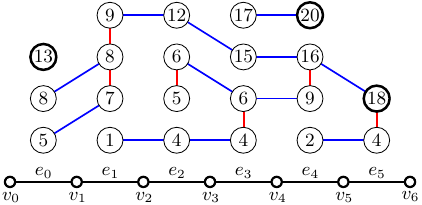} \\[.2cm]
    \includegraphics[width=0.45\textwidth]{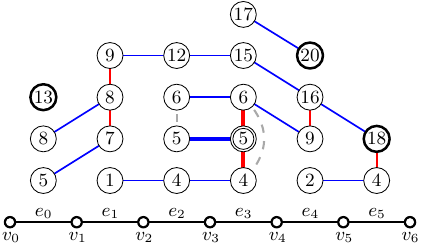}
    \hspace{0.05\textwidth}
    \includegraphics[width=0.45\textwidth]{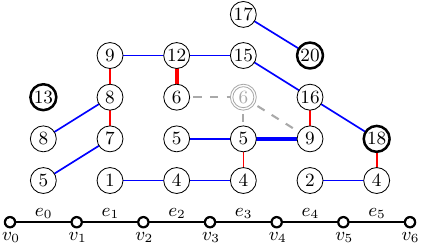}

    \caption{Top left: a forest representing all possible upward earliest arrival paths. 
    Top right: the corresponding weighted forest $\F$. Red edges have weight $0$, while blue ones have weight $1$. Here, a label $\ell \in \lambda(e_i)$ is shown above edge $e_i$ and models the node $(\ell, i)$. 
    Bottom left and bottom right: the forests $\F$ obtained after adding label $5$ to $\lambda(e_3)$ and then deleting label $6$ from $\lambda{e_3}$.  New edges are shown in bold while removed ones are dashed.}
    \label{fig:enter-label}
\end{figure}

\noindent Our data structure stores:
\begin{itemize}
    \item A \emph{top tree} \cite{AlstrupHLT05} representing $\F$. A top-tree is a data structure capable of maintaining an edge-weighted forest under insertion of new nodes,  deletion of singleton nodes, and under link and cut operations, i.e., addition and deletion of edges.
    Moreover, it supports weighted level ancestor queries: given a vertex $v$ and $w \in \mathbb{N}$, it reports the $w$-th weighted level ancestor $\LA(v, w)$ of $v$ in $\F$, i.e., the deepest ancestor of $v$ at distance at least $w$ from $v$ (if any).
    Each of the above operations requires $O(\log \eta)$ worst-case time, where $\eta$ is the number of nodes in the forest. A top tree with $\eta$ nodes can be built in $O(\eta)$ time. Whenever a node/edge is added removed from $\F$, we perform the corresponding update operation on the backing top tree.
    
    \item A dictionary $D_i$ for each edge $e_i$ that supports insertions, deletions, and \linebreak predecessor/successor queries in $O(\log \eta)$ worst-case time per operation, where $\eta$ is the number of keys in $D_i$. The dictionary can be build in $O(\eta \log \eta)$ time. The keys in $D_i$ are the labels in $\lambda(e_i)$ and each label $\ell$ is associated with a pointer to vertex $(\ell, i)$ in $\F$. Such a dictionary can be implemented using any dynamic balanced binary search tree.
\end{itemize}

\subparagraph*{Answering EA queries.}
To report $\EA(v_i, v_j, t)$ with $j>i$, we first find $\ell = \succ(t, \lambda(e_i))$ using $D_i$, then we query $\F$ for the $(j-i-1)$-th weighted level ancestor $(e_{j-1}, \ell^*)$ of $(e_i, \ell)$.
If such ancestor exist, we answer with $\ell^*$. Otherwise we answer with $+\infty$. Finding $\ell$ requires time $O(\log L)$ in the worst case, while querying $\F$ can be done in worst-case $O(\log M)$ time. Hence the overall worst-case time required to answer the query is $O(\log M)$.

\subparagraph*{An auxiliary procedure.}
A label insertion causes the addition of a new node into $\F$, while a label deletion causes the deletion of the corresponding node from $\F$.
This may result in some nodes in $\F$ that have an incorrect parent (or lack of thereof) from the one that would be expected according to the construction discussed above.

To address this problem, we define an auxiliary procedure FixParent($\ell,  i$) that will be helpful in restoring the correct state of $\F$. Such procedure takes the index $i$ of some edge $e_i$ in $F$ and a label $\ell \in \lambda(e_i)$, and ensures that the edge from node $(\ell, i)$ to its parent in $\F$ (if any) is properly set (or unset) by only considering $D_{i}$ and $D_{i+1}$.

To implement FixParent($\ell,  i$) we first \emph{cut} the current edge from $(\ell, i)$ to its parent in $\F$ (if any).
Then, we compute $\sigma_i(\ell)$ in $O(\log L)$ time by searching for the needed successors in $D_i$ and $D_{i+1}$, according to the definition of $\sigma_i$. Finally, we \emph{link} vertex $(\ell, i)$ with $\sigma_i(\ell)$, if any, in $\F$.
Since link and cut operations also require time $O(\log M)$, the overall worst-case time spent by FixParent is $O(\log M)$.

\subparagraph*{Label addition.}
To add label $\ell$ on edge $e_i$, we first insert node $(\ell, e_i)$ into $\F$, and $\ell$ into $D_i$. 
Then, if $i \ge 1$ and $\ell' = \pred(\ell, \lambda(e_{i-1}))$ exists, we perform FixParent($\ell, i-1$). Next, we perform FixParent($\ell, i$). Finally, if $\ell^- = \pred(\ell-1, \lambda(e_i))$ exists, we perform FixParent($\ell^-, i$). See \Cref{fig:enter-label} for an example.

The overall worst-case time required is $O(\log M)$ since we only perform a constant number of predecessor/successor lookups, node insertions into $\F$, and calls to FixParent.

\subparagraph*{Label deletion.}
To remove label $\ell$ from edge $e_i$, we first delete $\ell$ from $D_i$ and remove node $(e_i, \ell)$, along with all its incident edges, from $\F$.
Then, if $i \ge 1$ and $\ell' = \pred(\ell, \lambda(e_{i-1}))$ exists, we perform FixParent($\ell', i-1$).
Finally, if $\ell^- = \pred(\ell-1, \lambda(e_i))$ exists, we perform FixParent($\ell^-, i$). The overall worst-case time required is $O(\log M)$ since we only perform a constant number of predecessor/successor lookups, edge/node deletions from $\F$, and calls to FixParent.

\section{Our data structure for temporal forests}
\label{sec:temporal_forests}

In this section we discuss how our data structure for temporal paths with static topology can be generalized to temporal forests of rooted
trees while also supporting link and cut operations as well as insertions and deletions of singleton vertices.
More precisely, we prove the following result.

\begin{theorem}\label{thm:fully_dynamic_forest}
    Given a temporal forest of rooted trees, it is possible to build, in $O(n + M\log L)$ time, a data structure of linear size, where $n$ is the number of vertices, $L$ is the maximum number of labels on the same temporal edge, and $M$ is the total number of (not necessarily distinct) labels in the forest at the time of the operation. The data structure supports label additions, label deletions, link, and cut operations, and addition/deletion of singleton vertices in $O(\log M)$ worst-case time per operation, EA and LD queries in $O(\log L \cdot \log M)$ worst-case time per operation, and reachability queries in $O(\log M)$ worst-case time per operation. 
\end{theorem}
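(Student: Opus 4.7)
The plan is to generalize the path-case construction of \Cref{sec:paths} to a dynamic forest of rooted trees. I would first extend the forest $\F(\lambda)$ by identifying each non-root vertex $u$ with its parent edge $e_u$: for each $\ell\in\lambda(e_u)$ I add a node $(\ell,u)$ and define $\sigma_u(\ell)$ by the same case split as in the path case, comparing $\succ(\ell+1,\lambda(e_u))$ with $\succ(\ell,\lambda(e_{p(u)}))$, where $p(u)$ is $u$'s parent in $F$. Each node of $\F$ still has at most two children, so I would maintain $\F$ via a top tree supporting weighted level-ancestor queries, link, cut, and node insertion/deletion in $O(\log M)$. In parallel I would keep a BST $D_e$ of labels per edge of $F$ and a separate top tree on $F$ itself to answer $\lca$, depth, and adjacency queries in $O(\log M)$. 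A mirrored copy of the whole construction, built on $-F$, handles the LD side via \Cref{lemma:LD_eq_minusLA}.

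For a general $\EA$ query on distinct $u,v$, I would compute $w=\lca(u,v)$. If $w=v$ the query is an upward EA, answered in $O(\log M)$ by a single weighted level-ancestor query on $\F$ at distance $\text{depth}(u)-\text{depth}(v)-1$, exactly as in \Cref{thm:static_path}. Otherwise I split the query at $w$ into $t'=\EA(u,w,t)$ and a downward part from $w$ to $v$; because $\LD(w,v,a)$ is monotone non-decreasing in $a$, the downward value equals $\min\{a\in\lambda(e_v):\LD(w,v,a)\ge t'\}$, which I would binary-search inside $D_{e_v}$, issuing at each probe an upward EA query on $-F$ in $O(\log M)$. The resulting bound of $O(\log L\cdot\log M)$ matches the target for EA and, symmetrically, for LD, while reachability reduces to a single existence check in $O(\log M)$.

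For updates, label additions and deletions as well as singleton insertions and deletions transfer directly from the path case: each triggers a constant number of $\text{FixParent}$ calls and $D_e$-operations, each costing $O(\log M)$. Link and cut are the substantively new operations. A careful analysis of $\sigma$ shows that linking the root $u$ of a tree $T$ to a vertex $v$ with a new edge $(u,v)$ carrying label $\ell$ affects exactly (a) the new node $(\ell,(u,v))$ itself, and (b) for each child $c$ of $u$ in $T$, the unique node $(\ell^c,c)$ where $\ell^c=\pred(\ell,\lambda((c,u)))$ is the largest label in $\lambda((c,u))$ that is at most $\ell$---provided $\ell^c$ exists and its own in-edge successor is either absent or strictly larger than $\ell$, in which case its $\sigma$-parent must be redirected to $(\ell,(u,v))$. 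Cut is symmetric.

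The main obstacle is therefore executing up to $\Theta(\deg_T(u))$ simultaneous $\sigma$-redirections in $O(\log M)$ worst case, since a per-child update loop is immediately too expensive. Overcoming it, I expect, requires a layered representation of $\F$ that interleaves the top tree with auxiliary balanced trees indexed by the interior-label structure of each vertex's child edges, so that a link can identify the affected nodes and batch-splice them as children of $(\ell,(u,v))$ in a single $O(\log M)$ structural operation (and cut can perform the inverse detach-and-reinsert). Formalizing the right invariants for this splice, showing that they are preserved under all other update types, and verifying compatibility with weighted level-ancestor queries on $\F$ and with the symmetric $-F$ instance is the central technical hurdle; once it is settled the query and update bounds of the theorem follow.
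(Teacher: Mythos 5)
There is a genuine gap, and it sits exactly at what you yourself call the central technical hurdle. You define $\sigma_u(\ell)$ by the verbatim path-case rule, comparing $\succ(\ell+1,\lambda(e_u))$ with $\succ(\ell,\lambda(e_{p(u)}))$. Under this definition your claim that each node of $\F$ has at most two children is false: if $v$ has children $c_1,\dots,c_\Delta$ in $F$, a single node $(\ell',v)$ on the edge $e_v$ can be the $\sigma$-parent of one node from \emph{each} of the $\Delta$ child edges (plus one from $e_v$ itself), so its degree can be $\Theta(\Delta)$. This breaks not only link and cut but also the ordinary updates you claim ``transfer directly'': deleting the label $\ell'$ from $e_v$ orphans up to $\Theta(\Delta)$ nodes, and inserting a new label on $e_v$ can steal up to $\Theta(\Delta)$ children from the old successor node, so a constant number of FixParent calls does not suffice. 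You acknowledge the $\Theta(\deg_T(u))$ rewiring problem only for link/cut and propose to solve it with an unspecified ``layered representation'' that batch-splices children in one $O(\log M)$ operation, but you give no construction, no invariants, and no argument that weighted level-ancestor queries survive such splicing; the theorem's bounds are asserted to ``follow'' from a structure you have not built.

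The paper avoids the need for any batch-splice machinery by changing the definition of $\sigma$ itself. All nodes coming from sibling edges into a common parent $v$ are grouped into a block $\mathcal{B}_v$, totally ordered lexicographically by (label, child vertex); the parent of a node is either its strict successor \emph{within the block} (a red edge of weight $0$) or the successor label on $e_v$ (a blue edge of weight $1$), whichever comes first. With this blocked $\sigma$, every node really does have at most one red and one blue child, so a label addition or deletion changes the parent of only a constant number of nodes (this is the content of \Cref{lm:correct_updates}), and link/cut reduce to a single label addition/deletion plus a link/cut on the top tree of $F$, all in $O(\log M)$ worst case. Correctness of the weighted level-ancestor query then needs its own argument (\Cref{lemma:correctness_ea_query}), because red intra-block edges let the path in $\F$ wander among sibling edges at weighted distance $0$ before climbing. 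Your query-side plan (split at the LCA, answer the upward side by one weighted level-ancestor query, and binary-search the downward side over $\lambda(e_v)$ using the mirrored $-F$ structure) is essentially the paper's and is fine; the missing piece is the redefinition of $\F$ that makes the update bounds true.
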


We observe that it is easy to extend \Cref{thm:static_path} to \emph{fixed} temporal forests where one only needs to support label additions and deletions and aims for a query time of $O(\log n \cdot \log M)$, where $n$ is the number of vertices in $F$.
Indeed, it suffices to construct the data structure for temporal paths on each path of a heavy-light decomposition \cite{SLEATOR1983362} of (each tree of) $F$. 
Details are given in \Cref{apx:heavy_light_decomposition}.

This section is organized as follows: we first discuss a data structure that only supports updates are label additions and deletions, and then we show how to handle link and cut operations as well as insertions and deletions of singleton vertices.

\subsection{A data structure supporting only label additions and deletions}\label{subsec:fixed_forest}

While the natural generalization of our construction for temporal paths of \Cref{sec:paths} to the case of trees would already provide a data structure supporting fast upward EA queries, such a solution runs into a similar problem as the one discussed for the naive approach. 
Indeed, in a tree of degree $\Delta$ a node of $\F$ may have $\Omega(\Delta)$ children, and its removal may cause the rewiring of $\Omega(\Delta)$ edges.

Our construction avoids this problem by further grouping the (nodes in $\F$ corresponding to the) edges from sibling vertices to their common parent $v$ in $F$ into a \emph{block} $\mathcal{B}_v$ (see \Cref{fig:tree}).

For a non-root vertex $v \in V(F)$, we denote by $p(v)$ the parent of $v$ in $F$, and by $e_v$ the edge $(v, p(v))$.
The forest $\F(\lambda)$ contains a \emph{node} $(\ell, v)$ for each non-root vertex $v \in V(F)$ and for each $\ell \in \lambda(e_v)$.
Moreover, for every non-leaf vertex $v \in V(F)$, we define the \emph{block of $v$} as the set $\mathcal{B}_v(\lambda)$ 
containing all nodes $(\ell, u)$ where $u$ is a child of $v$ in $F$.

We fix an arbitrary strict total ordering of the vertices of $V(F)$ and we think of the nodes of $\F$ as being ordered w.r.t.\ the order relation that compares nodes lexicographically, i.e., $(\ell, u)$ precedes $(\ell', v)$ if either $\ell < \ell'$, or if $\ell = \ell'$ and $u$ precedes $v$ in the chosen ordering of the vertices.

\begin{figure}
    \centering
    \includegraphics[width=\textwidth]{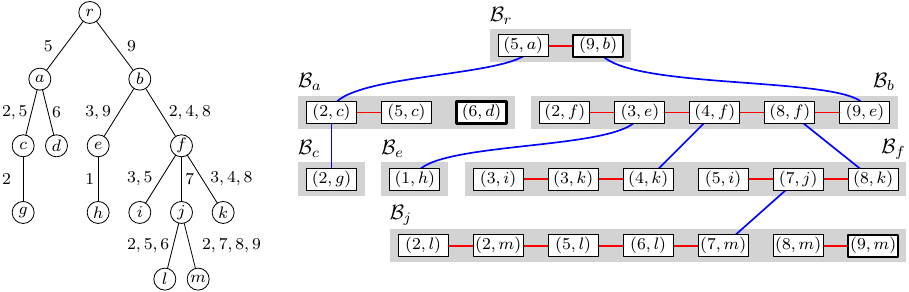}
    \caption{Left: a sample rooted temporal tree. Right: a representation of the forest $\F$ maintained by the data structure for temporal forests of \Cref{sec:temporal_forests}.}
    \label{fig:tree}
\end{figure}

Similarly to the case of temporal paths, the edges of $\F(\lambda)$ are defined by making use of a partial function $\sigma_{u}(\ell, \lambda)$ that associates the nodes of $\F$ to their parents.
More precisely, given a non-root vertex $u \in V(F)$ and $\ell \in \lambda(e_u)$, we let $v = p(u)$ so that $(\ell, u) \in \mathcal{B}_v$.
If $v$ is a root in $F$, $\sigma_{u}(\ell, \lambda)$ is defined as the strict successor of $(\ell, u)$ in $\mathcal{B}_v$, if it exists, otherwise $\sigma_{u}(\ell, \lambda)$ is undefined.

To define $\sigma_{u}(\ell, \lambda)$ when $v$ is not a root of $F$, let $\ell' = \succ(\ell, \lambda(e_v))$. We distinguish two cases depending on whether the strict successor $(\ell^+, u^+)$ of $(\ell, u)$ in $\mathcal{B}_v$ exists.
If $(\ell^+, u^+)$ exists we define
$
\sigma_u(\ell, \lambda) = 
\begin{cases}
        (\ell^+, u^+) & \text{if } \ell^+ \le \ell'; \\
        (\ell', v)  & \text{if } \ell^+ > \ell'.
\end{cases}
$

\noindent If $(\ell^+, u^+)$ does not exist, then $\sigma_u(\ell, \lambda) = (\ell', v)$ when $\ell' < +\infty$ and $\sigma_u(\ell, \lambda)$ is undefined when $\ell' = + \infty$. The nodes $(\ell, u)$ such that $\sigma_u(\ell , \lambda)$ is undefined are the roots of the corresponding trees in $\F(\lambda)$.
Whenever $\lambda$ is clear from context, we write $\F$ in place of $\F(\lambda)$, $\mathcal{B}_v$ in place of $\mathcal{B}_v(\lambda)$, and $\sigma_u(\ell)$ in place of $\sigma_u(\ell, \lambda)$.

We say that an edge in $\F$ between nodes in the same block is \emph{red}, while inter-block edges are \emph{blue}.
We assign weight $0$ to red edges, and weight $1$ to blue edges.
See \Cref{fig:tree} for an example.

\noindent Our data structure stores:
\begin{itemize}
    \item A top tree representing $\F$.
    In addition to the operations discussed in \Cref{sec:paths}, a top tree also supports lowest common ancestor (LCA) queries, i.e., given two nodes $u, v$ of $\F$, it either reports that $u$ and $v$ belong to different trees in $\F$, or it answers with the deepest vertex $w$ in the unique tree $T$ containing both $u$ and $v$ such that $w$ is an ancestor of both $u$ and $v$.  An LCA query requires $O(\log \eta)$ worst-case time, where $\eta$ is the number of nodes in $\F$.
    
    \item A dictionary $D_v$ for each non-root vertex $v \in V(F)$ that stores a key for each label in $\lambda(e_v)$ and that supports insertions, deletions and predecessor/successor queries. Each key $\ell$ stores, as satellite data, a pointer to node $(\ell, v)$ in $\F$.    
    
    \item A dictionary for each block $\mathcal{B}_v$, where $v$ is a non-leaf vertex in $F$. Such a dictionary stores all elements in $\mathcal{B}_v$ and supports insertions, deletions and predecessor/successor queries w.r.t.\ our order relation on the nodes. 
    
    \item The depth $d(v)$ of each vertex $v$ in the (unique) rooted tree $T$ containing $v$ in $F$, i.e., the hop distance between $v$ and the root of $T$.
\end{itemize}

\subparagraph*{Adapting FixParent.}
FixParent($\ell, v$) takes a non-root vertex $v \in V(F)$ and a label $\ell \in \lambda(e_v)$, and ensures that the edge from node $(\ell, v)$ to its parent in $\F$, if any, is properly set. 

To implement FixParent($\ell, v$) we first \emph{cut} the current edge from $(\ell, v)$ to its parent in $\F$ (if any).
Then we compute $\sigma_v(\ell)$ in $O(\log L)$ worst-case time by searching for the needed successors in $\mathcal{B}_{p(v)}$ and $D_{p(v)}$ (if $p(v)$ exists), according to the definition of $\sigma_v$. Notice that $\sigma_v(\ell)$ might be undefined. 
Finally, if $\sigma_v(\ell)$ is defined, we \emph{link} vertex $(\ell, v)$ with $\sigma_v(\ell)$.
Since link and cut operations require time $O(\log M)$, the overall worst-case time spent by FixParent is $O(\log M)$.

\subparagraph*{Label addition.}
To add label $\ell$ on edge $e_v$, we first insert node $(\ell, v)$ into $\F$, $\ell$ into $D_v$, and $(\ell, v)$ into $\mathcal{B}_{p(v)}$. 
Then, if $v$ is not a leaf in $F$ and $(\ell', u) = \pred((\ell, +\infty), \mathcal{B}_v)$ exists, we perform FixParent($\ell', u$).\footnote{Notice that the set of nodes in $B_v$ does not change following the label addition.} Next, we perform FixParent($\ell, v$).
Finally, if the strict predecessor $(\ell^-, v^-)$ of $(\ell, v)$ in $\mathcal{B}_{p(v)}$ exists, we perform FixParent($\ell^-, v^-$).\footnote{Notice  that  the insertion of  $(\ell, v)$ into $B_{p(v)}$ does not  affect the value (nor the existence) of the strict predecessor of $(\ell, v)$ in $B_{p(v)}$.}

\subparagraph*{Label deletion.}
To delete label $\ell$ from edge $e_v$, we remove the node $(\ell, v)$ from $\F$ along with all its incident edges, we delete $\ell$ from $D_v$, and we delete $(\ell, v)$ from $\mathcal{B}_{p(v)}$.
Then, if $v$ is not a leaf in $F$ and $(\ell', u) = \pred((\ell, +\infty), \mathcal{B}_v)$ exists, we perform FixParent($\ell', u$).
Finally, if the strict predecessor $(\ell^-, v^-)$ of $(\ell, v)$ in $\mathcal{B}_{p(v)}$ exists, we perform FixParent($\ell^-, v^-$). \medskip

\noindent To prove the correctness of our label addition deletion procedures we  need the following lemma which captures the changes to $\F$ following an update:
\begin{lemma}
\label{lm:correct_updates}
    Consider a forest $F$, two functions $\lambda', \lambda : E(F) \to \mathscr{P}(\mathbb{Z})$ and an edge $e_v \in E(F)$ such that, for each $e' \in E(F) \setminus \{e_v\}$, $\lambda(e') = \lambda'(e')$ and $\lambda(e_v) = \lambda'(e_v) \setminus \{ \ell \}$ with $\ell \in \lambda'(e_v)$.
    Let $\F' = \F(\lambda')$ and $\F = \F(\lambda)$.
    Let $U$ be the set containing $(\ell, v)$, $(\ell', u) = \pred((\ell, +\infty), \mathcal{B}_v(\lambda))$ (if it exists), and the strict predecessor $(\ell^-, v^-)$ of $(\ell, v)$ in $\mathcal{B}_{p(v)}(\lambda)$ (if it exists).

    We have that $V(\F) = V(\F') \setminus \{ (\ell, v) \}$ and that all nodes in $V(\F) \setminus U$ have the same parent (or lack of thereof) in both $F$ and $\F'$.
\end{lemma}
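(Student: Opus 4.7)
The plan is to establish the two assertions separately. The first, $V(\F) = V(\F') \setminus \{(\ell, v)\}$, is immediate from the definition of $\F(\cdot)$: its nodes are the pairs $(a, w)$ with $w$ a non-root vertex of $F$ and $a \in \lambda(e_w)$, and by hypothesis $\lambda$ differs from $\lambda'$ only by removing the label $\ell$ from $\lambda'(e_v)$.

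For the parent-preservation claim, I would fix an arbitrary $(a, w) \in V(\F) \setminus U$ and verify $\sigma_w(a, \lambda) = \sigma_w(a, \lambda')$ (possibly with both sides undefined). The guiding structural observation is that $\sigma_w(a, \cdot)$ depends only on (i) the strict successor of $(a, w)$ inside $\mathcal{B}_{p(w)}$, and (ii) the value $\succ(a, \lambda(e_{p(w)}))$ (when $p(w)$ is not a root of $F$); meanwhile the passage from $\lambda'$ to $\lambda$ affects only $\mathcal{B}_{p(v)}$, which loses $(\ell, v)$, and $\lambda(e_v)$, which loses $\ell$. This motivates a case split on the position of $p(w)$ relative to $v$ and $p(v)$.

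The two easy subcases are $p(w) \notin \{v, p(v)\}$, in which neither (i) nor (ii) is affected, and $p(w) = p(v)$ with $w \neq v$, in which (ii) is unaffected (since $e_{p(v)} \neq e_v$) while (i) can change only when the strict successor of $(a, w)$ in $\mathcal{B}_{p(v)}(\lambda')$ equals $(\ell, v)$. By definition of the strict successor, this forces $(a, w)$ to be the strict predecessor of $(\ell, v)$ in $\mathcal{B}_{p(v)}(\lambda') = \mathcal{B}_{p(v)}(\lambda) \cup \{(\ell, v)\}$, which coincides with $(\ell^-, v^-) \in U$, contradicting $(a, w) \notin U$.

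The hard part is the remaining subcase $p(w) = v$, where $\succ(a, \lambda(e_v))$ can genuinely differ from $\succ(a, \lambda'(e_v))$ for arbitrarily many values of $a$. Writing $\ell'_\lambda = \succ(a, \lambda(e_v))$ and $\ell'_{\lambda'} = \succ(a, \lambda'(e_v))$, the crux is that $\ell'_{\lambda'} < \ell'_\lambda$ forces $\ell'_{\lambda'} = \ell$ (with no element of $\lambda(e_v)$ in $[a, \ell)$), yet the definition of $\sigma_w$ ignores (ii) in favor of (i) whenever the strict successor $(a^+, w^+)$ of $(a, w)$ in $\mathcal{B}_v$ satisfies $a^+ \leq \ell$: in that regime one checks that $\sigma_w(a, \lambda) = \sigma_w(a, \lambda') = (a^+, w^+)$, since $a^+ \leq \ell = \ell'_{\lambda'} \leq \ell'_\lambda$. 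Consequently, any genuine parent change requires both $a \leq \ell$ and the strict successor of $(a, w)$ in $\mathcal{B}_v$ to be either undefined or to have first coordinate strictly greater than $\ell$; in lexicographic terms, this characterizes $(a, w)$ uniquely as $\pred((\ell, +\infty), \mathcal{B}_v(\lambda)) = (\ell', u) \in U$. This again contradicts $(a, w) \notin U$ and completes the case analysis.
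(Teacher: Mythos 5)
Your proposal follows essentially the same route as the paper's proof: the vertex-set claim by construction, then the observation that $\sigma_w(a,\cdot)$ depends only on the strict successor of $(a,w)$ in $\mathcal{B}_{p(w)}$ and on $\succ(a,\lambda(e_{p(w)}))$, so that only nodes of $\mathcal{B}_v \cup \mathcal{B}_{p(v)}$ need attention, and within each block the unique node whose parent can actually change is identified with $(\ell',u)$ resp.\ $(\ell^-,v^-)$. Both of your key steps are sound: the tie-breaking argument $a^+ \le \ell = \succ(a,\lambda'(e_v)) < \succ(a,\lambda(e_v))$, and the lexicographic characterization of the exceptional node in $\mathcal{B}_v$ as $\pred((\ell,+\infty),\mathcal{B}_v(\lambda))$.

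There is, however, a slip in the case enumeration: as written, your three subcases ($p(w)\notin\{v,p(v)\}$; $p(w)=p(v)$ with $w \neq v$; $p(w)=v$) do not cover the nodes $(a,v)$ with $a \in \lambda(e_v)\setminus\{\ell\}$, i.e., the case $w=v$. These nodes lie in $\mathcal{B}_{p(v)}$, are in general not in $U$, and so must be argued about. The restriction ``$w \neq v$'' is unnecessary: for $w=v$ the quantity $\succ(a,\lambda(e_{p(v)}))$ is still unaffected (since $e_{p(v)} \neq e_v$), and the strict successor of $(a,v)$ in $\mathcal{B}_{p(v)}$ can change only if under $\lambda'$ it equaled $(\ell,v)$, which again forces $(a,v)=(\ell^-,v^-)\in U$. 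Dropping ``$w \neq v$'' --- i.e., treating all of $\mathcal{B}_{p(v)}\setminus U$ uniformly, as the paper does --- closes the gap with no other change to your argument.
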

\begin{proof}
    We have $V(\F')=V(\F)\setminus\{(\ell,v)\}$ by construction.
    Let $\mathcal{B}=\mathcal{B}(\lambda)$ and $\mathcal{B}'=\mathcal{B}(\lambda')$. To ease the notation we will write $\sigma_x(\ell'')$ in place of $\sigma_x(\ell'', \lambda)$, and $\sigma'_x(\ell'')$ in place of $\sigma_x(\ell'', \lambda')$.    

    We argue that, for the nodes in  $V(F) \setminus U$, $\sigma_x(\ell'') = \sigma'_x(\ell'')$. 
        
    Given a generic node $(\ell'', x) \in V(F) \setminus U$,  the definition $\sigma'_x(\ell'')$ (resp.\ $\sigma_x(\ell'')$) depends only on $\lambda(e_x)$ and on the nodes in $\mathcal{B}'_{p(x)}$ (resp. $\mathcal{B}_{p(x)}$).
    Since $\mathcal{B}'_y = \mathcal{B}_y$ for all $y \neq p(v)$, this implies that we only need to consider 
    the nodes $(\ell'', x)$ in $(\mathcal{B}_v \cup \mathcal{B}_{p(v)}) \setminus U$. 
    
    We analyze the cases $(\ell'', x) \in  \mathcal{B}_v \setminus U$ and $(\ell'', x) \in  \mathcal{B}_{p(v)} \setminus U$ separately.

    We consider  $(\ell'', x) \in  \mathcal{B}_v \setminus U$ first.
    In this case we have $p(x)=v$.
    Since $\mathcal{B}_v = \mathcal{B}'_v$,
    the strict successor of $(\ell'', x)$ is the same in both sets.
    Suppose then that $\succ(\ell'', \lambda(e_v)) \neq \succ(\ell'', \lambda'(e_v))$.
    We must have $\ell'' \le \ell$ which implies the existence of $(\ell', u)$ and that $(\ell'', x)$ precedes $(\ell',u)$ in $\mathcal{B}_v = \mathcal{B}'_v$. Then $\ell'' \le \ell'$ and 
    $\sigma_x(\ell'')$ and $\sigma'_x(\ell'')$ are both defined as the strict successor of $(\ell'', x)$ in $\mathcal{B}_v = \mathcal{B}'_v$.
    
    We now consider $(\ell'', x) \in \mathcal{B}_{p(v)} \setminus U$. Let $y = p(x) = p(v)$.
    Clearly, $\succ(\ell'', \lambda(e_y)) = \succ(\ell'', \lambda'(e_y))$ since $e_y \neq e_v$. 
    Moreover, the only node that can have a different strict successor in $\mathcal{B}_{y}$ and $\mathcal{B'}_{y}$
    is the strict predecessor $(\ell^-, v^-)$ of $(\ell, v)$ (which coincides in $\mathcal{B}_{y}$ and $\mathcal{B'}_{y}$).
    Since $(\ell^-, v^-) \in U$  $(\ell'', x) \neq (\ell^-, v^-)$, hence $\sigma_x(\ell'') = \sigma'_x(\ell'')$.
\end{proof}

Then, the correctness of the label addition procedure follows from \Cref{lm:correct_updates} with $\lambda$ (resp.\ $\lambda'$) chosen as the function that maps each edge to its labels before (resp.\ after) the addition, once one observe that FixParent is invoked on all vertices of the set $U$ defined by the lemma. Symmetrically, the correctness of the label deletion procedure follows from \Cref{lm:correct_updates} when the roles of $\lambda$ and $\lambda'$ are reversed.

The overall worst-case time required per update is $O(\log M)$ since we only perform a constant number of (strict) predecessor lookups, edge/node deletions from $\F$, and calls to FixParent. Observe that \Cref{lm:correct_updates} implies that, when node $(\ell ,v)$ is deleted, its degree in $\F$ is constant (since each child of $(\ell, v)$ changes its parent following the deletion).

\subparagraph*{Answering upward EA queries and downward LD queries.}
We first discuss how to report $\EA(u,v, t)$ when $u$ is a proper descendent of $v$ in $F$.\footnote{Since the topology of $F$ does not change, checking whether $u$ is a proper descendent of $v$ can be done in constant time after $O(n)$-time preprocessing, where $n$ is the number of vertices in $F$.}
To do so, we start by finding $\ell = \succ(t, \lambda(e_u))$. If $\ell = +\infty$, we answer with $+\infty$, otherwise we query $\F$ for the $(d_v-d_u-1)$-th weighted level ancestor of $(\ell,u)$. If such an ancestor $(\ell^*, w)$ exists we answer with $\ell^*$, otherwise we answer with $+\infty$. This requires $O(\log M)$ worst-case time since it only involves a successor lookup in $D_u$ and a weighted level ancestor query on the top tree representing $\F$.

The correctness of our query procedure stems by a structural property of $\F$ captured by the following lemma, whose proof is given in \Cref{apx:ea_query_correctness}.

\begin{restatable}{lemma}{correctnessEAquery}
    \label{lemma:correctness_ea_query}
    Let $u, v \in V(F)$ where $v$ is a proper ancestor of $v$, let $t \in \mathbb{Z} \cup \{-\infty\}$, and define $\ell = \succ(t, \lambda(e_u))$. If (i) $\ell = +\infty$ or (ii) $\ell < +\infty$ and $\LA( (\ell, u), d_v - d_u - 1)$ does not exist, then $\EA(u, v, t) = +\infty$. Otherwise, $\LA( (\ell, u), d_v - d_u - 1) = (\ell^*, w)$ where $w$ is the unique ancestor of $u$ such that $p(w)=v$ and $\ell^* = \EA(u, v, t)$.
\end{restatable}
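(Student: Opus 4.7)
The plan is to show that the unique upward ancestor path in $\F$ starting at $(\ell, u)$ faithfully encodes the greedy earliest-arrival computation from $u$ to $v$, with each blue edge corresponding to advancing one step along the $F$-path $u = v_0, v_1 = p(u), \dots, v_k = v$, where $k = d_u - d_v$. Let $w = v_{k-1}$ be the unique ancestor of $u$ with $p(w) = v$, and define the greedy sequence $\ell_1 = \succ(t, \lambda(e_u)) = \ell$ and $\ell_{i+1} = \succ(\ell_i, \lambda(e_{v_i}))$ for $i = 1, \dots, k-1$. Standard greedy arguments give $\EA(u, v, t) = \ell_k$ when $\ell_1, \dots, \ell_k$ are all finite and $+\infty$ otherwise, so case (i) is immediate: if $\ell = +\infty$, no temporal path from $u$ can depart at time $\geq t$.

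Assuming $\ell_1 < +\infty$, I would prove by induction on $i = 1, \dots, k$ the following structural claim: if $\ell_1, \dots, \ell_i$ are all finite, then the ancestor path of $(\ell, u)$ in $\F$ traverses exactly $i-1$ blue edges before entering $\mathcal{B}_{v_i}$ and the first node visited in $\mathcal{B}_{v_i}$ is $(\ell_i, v_{i-1})$; if instead $\ell_j = +\infty$ for some smallest $j \in \{2, \dots, i\}$, the path takes only $j-2 < i-1$ blue edges and terminates at a root of $\F$ before entering $\mathcal{B}_{v_i}$. The base $i = 1$ is trivial since $(\ell_1, v_0) \in \mathcal{B}_{v_1}$ with zero blue edges traversed.

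The inductive step reduces to a within-stage analysis of $\mathcal{B}_{v_i}$. Starting from $(\ell_i, v_{i-1})$, I would show by a secondary induction on red edges taken that every visited node $(\ell', y) \in \mathcal{B}_{v_i}$ satisfies $\ell_i \leq \ell' \leq \ell_{i+1}$ and therefore $\succ(\ell', \lambda(e_{v_i})) = \ell_{i+1}$. This invariant is preserved because $\sigma_y(\ell')$ selects the red successor $(\ell^+, y^+) \in \mathcal{B}_{v_i}$ only when $\ell^+ \leq \succ(\ell', \lambda(e_{v_i})) = \ell_{i+1}$, so $\ell^+$ stays in the interval $[\ell_i, \ell_{i+1}]$ and monotonicity of $\succ$ pins the target to $\ell_{i+1}$. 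The walk inside $\mathcal{B}_{v_i}$ halts when the strict successor either does not exist or exceeds $\ell_{i+1}$: when $\ell_{i+1} < +\infty$, the definition of $\sigma$ yields a blue edge to $(\ell_{i+1}, v_i) \in \mathcal{B}_{v_{i+1}}$, giving case (a) for $i+1$; when $\ell_{i+1} = +\infty$ and no strict successor exists, $\sigma$ is undefined and the path terminates at a root, giving case (b) for $i+1$.

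Specializing the structural claim to $i = k$ finishes the proof. If all $\ell_1, \dots, \ell_k$ are finite, the path reaches $(\ell_k, v_{k-1}) = (\EA(u, v, t), w)$ after exactly $k - 1 = d_u - d_v - 1$ blue edges, while every earlier node on the path has cumulative weight strictly less than $k - 1$; hence $\LA((\ell, u), d_u - d_v - 1) = (\ell_k, w)$ with $\ell_k = \EA(u, v, t)$, as claimed. If instead some $\ell_j = +\infty$ with $j \leq k$, the ancestor path has total weight at most $j - 2 < k - 1$, so $\LA((\ell, u), d_u - d_v - 1)$ does not exist and $\EA(u, v, t) = +\infty$, confirming case (ii). The main obstacle is the within-stage red-edge analysis: verifying that moving to strict successors inside a block preserves both the interval invariant $\ell' \in [\ell_i, \ell_{i+1}]$ and the outgoing blue-edge target $\ell_{i+1}$ hinges entirely on the $\ell^+ \leq \ell'$ branch of $\sigma$ and on the monotonicity of $\succ$.
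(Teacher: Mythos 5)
Your argument is correct, and its engine is the same as the paper's: the within-block analysis showing that along red edges every visited node $(\ell',y)\in\mathcal{B}_{v_i}$ keeps $\succ(\ell',\lambda(e_{v_i}))$ pinned to $\ell_{i+1}$, so that the (unique) outgoing blue edge lands on $(\ell_{i+1},v_i)$ — this is exactly the content of the paper's preliminary one/two-hop lemma. Where you differ is in how this is assembled: the paper proves the base cases $d_v-d_u\in\{1,2\}$ separately and then inducts on the hop distance, composing $\EA(u,v,t)=\EA(y,v,\EA(u,y,t))$ with the identity $\LA((\ell,u),k-1)=\LA(\LA((\ell,u),1),k-2)$, whereas you prove a single global structural claim describing the entire ancestor path of $(\ell,u)$ in $\F$ block by block (exactly $i-1$ blue edges before entering $\mathcal{B}_{v_i}$, entry node $(\ell_i,v_{i-1})$, or early termination at a root with $j-2$ blue edges when $\ell_j=+\infty$) and then read the weighted level ancestor off directly by counting blue edges. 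Your version is somewhat more self-contained (it does not invoke the composition of level-ancestor queries, only the monotonicity of cumulative weight along the ancestor chain), at the cost of carrying a two-level induction explicitly; the paper's version is shorter because it reuses the two-hop lemma as a black box. Two cosmetic points: you correctly use $d_u-d_v-1$ as the query depth (the paper's statement writes $d_v-d_u-1$, which with its definition of depth is a sign slip), and your appeal to the ``standard greedy'' identity $\EA(u,v,t)=\ell_k$ is the same use of the composition property $\EA(u,v,t)=\EA(w,v,\EA(u,w,t))$ that the paper makes, so it introduces no gap.
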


To answer downward LD queries, we maintain a mirrored data structure for $-F$, in which each original label $\ell$ is replaced with $-\ell$.
\Cref{lemma:LD_eq_minusLA} allows us to answer downward LD queries in $O(\log M)$ worst-case time by performing upward EA queries on the data structure for $-F$.

\subparagraph*{Answering upward LD queries and downward EA queries.}
To report $\LD(u, v, t)$ when $u$ is a proper descendent of $v$ in $F$, we binary search for the largest label $\ell^* \in \lambda(e_u)$ such that $\EA(u, v, \ell^*) \le t$. Then, we report $\ell^*$. The correctness of our query immediately follows from the fact that the values  $\EA(u, v, \ell)$ are monotonically non decreasing w.r.t.\ $\ell$. 
This requires a worst-case time of $O(\log L \cdot \log M)$. To report $\EA(w, v, t)$ when $v$ is a proper descendant of $w$ in $F$, we compute $\LD(v,w,-t)$ on the data structure for $-F$ and answer with $-\LD(v,w,-t)$.
The worst-case time required is $O(\log L \cdot \log M)$.

\subparagraph*{Answering general EA and LD queries.}
To answer a general $\EA(u,v,t)$ query we compute the lowest common ancestor $w$ of $u$ and $v$ in $F$ in constant time using the data structure in \cite{BenderF00}, and we return $\EA(w, v, \EA(u,v, t))$ where $\EA(w,v, \cdot)$ is a downward query, and $\EA(u, v, \cdot)$ is an upward query.
General LD queries can similarly be answered by using the data structure for $-F$. The worst-case time required to answer such queries is $O(\log L \cdot \log M)$.

\subparagraph*{Answering temporal reachability queries.}
To report whether a vertex $v$ is reachable from a vertex $u$ using a temporal path in $F$ that departs no earlier than $t_d$ and arrives no later $t_a$, we compute the lowest common ancestors $w$ of $u$ and $v$ in $F$ in constant time and we answer affirmatively iff $\EA(u,w,t_d) \le \LD(w, v, t_a)$. 
The overall worst-case time required is $O(\log M)$ since we only need to perform an upward EA query and a downward LD query.

\subsection{Supporting link and cut operations}
To support general \emph{link} and \emph{cut} operations we additionally maintain $F$ using a top tree.\footnote{Some technical care is needed to obtain the stated bounds, which only depend on $M$, when $M = o(n)$. This can be achieved by not actually storing singleton vertices in the top tree, so that the number of vertices in the top tree is always $O(M)$. The operations of our data structure are easy to adapt to handle this edge case. E.g., whenever a link operation on $F$ involves some singleton vertex $v$, we can add $v$ to the top tree immediately before performing the link.} Each time that an edge is added/removed from $F$, we perform the corresponding operation on the backing top tree.
Moreover, we no longer explicitly maintain the depths $d_v$, but rather we query the top tree of $F$ every time any such depth is needed.\footnote{Indeed, the top tree can report the root of the tree in $F$ that contains $v$ and the hop-distance between any two nodes in $F$ in logarithmic time.} Similarly, all LCA queries on $F$ are now performed using the backing top tree.

Insertions and deletion of singleton vertices are straightforward, therefore we only discuss how to handle link and cut operations. 

\subparagraph*{Link operations.}
To implement a link operation where $u$ is the root of some tree $T$ in $F$ and $v$ is some vertex of a tree $T'$ of $F$ with $T' \neq T$, and a label $\ell \in \mathbb{Z}$, we first \emph{link} $u$ and $v$ in $F$ by adding edge $(u,v)$, so that the parent of $u$ becomes $v$, and we add label $\ell$ to $(u,v)$ as explained in \Cref{subsec:fixed_forest}.
The worst-case time required is $O(\log M)$.

\subparagraph*{Cut operations.}
To cut an edge $(u,v)$ of $F$ with a single label $\ell \in \lambda( (u,v) )$, where $u$ is a child of $v$ w.l.o.g., we first remove the only label $\ell$  (corresponding to the only key in $D_u$) from $(u,v)$ as explained in \Cref{subsec:fixed_forest}. Then, we \emph{cut} $(u,v)$ from $F$, thus creating a new tree rooted in $u$.
The worst-case time required is $O(\log M)$.

\section{Our data structure for temporal forests with latencies}\label{sec:temporal_forests_with_latencies}

Our model of temporal graphs can be generalized by introducing \emph{latencies}.
In \emph{temporal graphs with latencies}, each edge $e$ is associated with a collection of pairs $(\ell, d)$ encoding that edge $e$ can be traversed at the \emph{departure time} $\ell$ stating from one of its endvertices in order to reach the other endvertex at time $\ell + d$. The value $d$ is called a \emph{latency}.

Equivalently, each (activation time, latency) pair $(\ell, d)$ can be expressed as $(\ell, \alpha)$, where $\alpha = \ell + d$ is the \emph{arrival time}. These two representations are clearly equivalent, but the latter one results in a lighter notation for our purposes. Therefore, in the rest of the section we will adopt the (departure time, arrival time) convention and we
accordingly define $\lambda(e)$ as the set of all (departure time, arrival time) pairs $(\ell, \alpha)$ associated with edge $e$.\footnote{The special case in which all labels $(\ell, \alpha)$ have $\alpha=\ell$ corresponds to the model used in the previous sections.}

A \emph{temporal path} $\pi$ from vertex $u\in V(F)$ to vertex $v \in V(F) \setminus \{u\}$ in $F$ is a sequence of triples $\langle (e_1, \ell_1, \alpha_1),  (e_2, \ell_2, \alpha_2), \dots, (e_k, \ell_k, \alpha_k) \rangle$ such that $\langle e_1, e_2, \dots, e_k \rangle$ is a path from $u$ to $v$  in $F$, $(\ell_i, \alpha_i) \in \lambda(e_i)$ for all $i=1,\dots,k$, and $\alpha_i \le \ell_{i+1}$ for all $i=1, \dots, k-1$.
The departure time of $\pi$ is $\ell_1$ and its arrival time is $\alpha_k$. The notions of earliest arrival paths, latest departure paths,  reachability, and the corresponding queries extend naturally.

In this section we argue that our data structure for temporal forests can be adapted to additionally support latencies. This comes at the cost of turning our worst-case bounds on the time complexities of the link and cut operations into amortized bounds that hold for the incremental case, in which only link operations are allowed, and in the decremental case, in which only cut operations are allowed.\footnote{Our data structure can still handle arbitrary sequences of link and cut operations, but the amortized bounds do not hold for this case.}

\begin{theorem}\label{thm:incremental_decremantal_latencies}
    Given a temporal forest of rooted trees with latencies, it is possible to build a data structure of linear size that supports EA and LD queries in $O(\log L \cdot \log M)$ worst-case time per operation, and reachability queries in $O(\log M)$ worst-case time per operation, where $L$ is the maximum number of labels on the same temporal edge, and $M$ is the total number of (not necessarily distinct) labels in the forest at the time of the operation.
    In the incremental case, the data structure also supports insertions of singleton vertices, label insertions, and link operations in amortized time $O(\log M)$.
    In the decremental case, the data structure also supports deletions of singleton vertices, label deletions, and cut operations in amortized $O(\log M)$ time and amortized building time of $O(M\log L)$.\footnote{In the decremental case, we can naturally assume that the initial number of vertices $n$ in $F$ satisfies $n = \Omega(M)$.}
\end{theorem}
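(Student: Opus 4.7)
The plan is to adapt the construction of \Cref{sec:temporal_forests} so that each node of $\F$ stores a triple $(\ell,\alpha,u)$ with $(\ell,\alpha)\in\lambda(e_u)$. Blocks $\mathcal{B}_v$ collect all such triples for children $u$ of $v$, and are ordered by the arrival coordinate $\alpha$ (with ties broken lexicographically). The function $\sigma_u(\ell,\alpha)$ is redefined using (i) the strict successor within $\mathcal{B}_{p(u)}$ under this ordering, and (ii) the pair $(\ell',\alpha')\in\lambda(e_{p(u)})$ with smallest $\ell'\geq\alpha$, which replaces the role that $\succ(\ell,\lambda(e_{p(u)}))$ played in the latency-free case. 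When latencies are uniform the map $\ell\mapsto\ell+d$ is a monotone bijection, so the modification collapses back to the original construction and yields the bounds of \Cref{cor:uniform_latencies} directly.

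For arbitrary latencies the new subtlety is that a pair $(\ell,\alpha)\in\lambda(e)$ may be \emph{dominated} by another pair $(\ell',\alpha')\in\lambda(e)$ with $\ell'\geq\ell$ and $\alpha'\leq\alpha$, in which case it never participates in an earliest arrival path that is not already realized by $(\ell',\alpha')$. I would therefore insert into $\F$ only the \emph{Pareto-optimal} pairs of each edge, i.e.\ those whose departures and arrivals are both strictly increasing when sorted. Within this restricted set, the Pareto label with smallest $\ell'\geq\alpha$ coincides with the label achieving the smallest $\alpha'$ subject to $\ell'\geq\alpha$, so the structural analysis of \Cref{sec:temporal_forests} (including \Cref{lm:correct_updates} and \Cref{lemma:correctness_ea_query}) carries over essentially verbatim, and the worst-case query bounds of $O(\log L\cdot\log M)$ for EA/LD and $O(\log M)$ for reachability follow as in the latency-free case.

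In the incremental case I would check in $O(\log L)$ whether an incoming pair $(\ell,\alpha)$ on $e_v$ is dominated by some existing Pareto pair (in which case it is stored in an auxiliary dictionary but not inserted into $\F$), or it enters the Pareto front and evicts the contiguous interval of currently Pareto-optimal pairs that it dominates, each removed via the label-deletion primitive of \Cref{subsec:fixed_forest}. Since no label is ever restored to the Pareto front, each of the $M$ labels contributes to at most one eviction, so the total eviction work is $O(M\log M)$, amortizing to $O(\log M)$ per insertion; link operations and singleton insertions remain $O(\log M)$ using the unchanged primitives. In the decremental case, I would pre-load all $M$ labels, compute the Pareto fronts per edge in $O(M\log L)$ total time, and maintain, for each edge, a secondary segment tree over the labels sorted by $\ell$ with each node storing the minimum $\alpha$ in its subtree; this supports Pareto-successor queries in $O(\log L)$ and enumeration of the Pareto points appearing in a given $\ell$-range in $O((1+k)\log L)$ time. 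A deletion of a non-Pareto label is then a single segment-tree update, while a deletion of a Pareto label $(\ell^*,\alpha^*)$ additionally triggers the promotion of the labels that become Pareto-optimal (those in the $\ell$-range bounded by the Pareto neighbours of $(\ell^*,\alpha^*)$ and dominated previously only by $(\ell^*,\alpha^*)$), each inserted into $\F$ via the label-insertion primitive.

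The main obstacle is bounding the cost of this ``unmasking'' step. Because deletions never introduce labels, every stored label can transition from dominated to Pareto-optimal at most once over the entire sequence, so the total promotion work telescopes to $O(M\log M)$ and amortizes to $O(\log M)$ per deletion. Cut operations reduce to a label deletion on the unique label of the cut edge followed by the $\F$-update already described, and singleton deletions are immediate. Putting everything together gives the claimed amortized $O(\log M)$ update time in both the incremental and the decremental settings, while EA, LD, and reachability queries retain their worst-case bounds, establishing \Cref{thm:incremental_decremantal_latencies}.
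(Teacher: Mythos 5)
Your proposal is correct in outline, but it takes a genuinely different route from the paper. The paper keeps \emph{every} pair $(\ell,\alpha)$ as a node of $\F$, orders each block by arrival time, defines the parent map through a NextHop function (minimum arrival among the parent-edge labels with departure at least the current arrival, ties broken towards larger departures), maintains a per-block dictionary $\mathcal{H}_v$ of ``heads'' (nodes attached by blue edges), and gets the amortized bounds by an accounting argument: coins are kept on nodes attached by blue edges (incremental case) or red edges (decremental case), and a single label update may rewire many heads, each rewiring paid by the coin of the node it recolours. You instead store in $\F$ only the non-dominated (Pareto) pairs of each edge; since on an antichain the departure and arrival orders coincide, the monotonicity underlying the latency-free construction is restored, every front update causes only $O(1)$ rewirings, and the amortization is pushed entirely into front churn (each label joins or leaves the front at most once in a purely incremental or purely decremental run), with dominated labels parked in auxiliary dictionaries and, for decremental unmasking, a segment tree supporting range-minimum-arrival queries. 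The paper's route buys a single uniform structure over all labels with no secondary Pareto bookkeeping, and \Cref{cor:uniform_latencies} with worst-case bounds falls out of it immediately; your route buys a cleaner reduction to the latency-free analysis and a more transparent potential function. Two places where ``carries over essentially verbatim'' hides real work: (i) you should state explicitly that discarding dominated labels preserves EA, LD and reachability values (easy, by replacing each dominated label of a temporal path with a dominator, but it must also be observed for LD, which is answered via the mirrored forest of \Cref{lemma:LD_eq_minusLA}, noting that the front is the same in both directions); and (ii) the analogue of \Cref{lm:correct_updates} must be re-proved in the antichain setting (blocks ordered by arrival, parent selection by departure), and evictions/promotions must be sequenced so that every intermediate per-edge label set is still an antichain (evict the dominated front labels before inserting the new pair, and only then insert; promote after the deletion), since otherwise the $O(1)$-rewiring claim does not follow from the latency-free lemma.
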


As before, we consider a temporal forest with latencies $F$ containing rooted temporal trees and, for a non-root vertex $v$, we define $e_v$ as the edge from $v$ to its parent $p(v)$ in the unique tree of $F$ containing $v$.
We define $\F(\lambda)$ as the forest containing a \emph{node} $(\alpha, \ell, v)$ for each non-root vertex $v$ and for each $(\ell, \alpha) \in \lambda(e_v)$.
For each non-leaf vertex $v \in V(F)$, we also define the \emph{block of $v$} as the set $\mathcal{B}_v(\lambda)$ containing all nodes $(\alpha, \ell, u)$ such that $p(u) = v$.
We fix an arbitrary order of the vertices of $V(F)$ and we think of the nodes of $\F$ as being ordered w.r.t.\ the order relation that compares nodes lexicographically, i.e., $(\alpha, \ell, u)$ precedes $(\alpha', \ell', v)$ if (i) $\alpha < \alpha'$, or (ii) $\alpha = \alpha'$ and $\ell < \ell'$, or (iii) $\alpha = \alpha'$, $\ell = \ell'$, and $u$ precedes $v$ in the chosen ordering of the vertices.
Informally, we first order the nodes in a block by arrival time, then by departure time, and finally by their corresponding vertex in $F$.

We now define the analogue of the function $\sigma$ for temporal forests with latencies.
More precisely, given a non root vertex $u$ in $F$ with parent $v = p(u)$ and $(\ell, \alpha) \in \lambda(e_u)$, we let $(\alpha^+, \ell^+, u^+)$ be the \emph{strict successor} of $(\alpha, \ell, u)$ in $\mathcal{B}_v(\lambda)$, if any.
If $v$ is a root of a tree in $F$, then $\sigma_{u}(\alpha, \ell, \lambda) = (t^+, \ell^+, u^+)$ if $(t^+, \ell^+, u^+)$ exists, otherwise $\sigma_{u}(\alpha, \ell, \lambda)$ is undefined.
When $v$ is not a root of any tree in $F$, we define NextHop($\alpha, \ell, u$) as 
the node $(\alpha',\ell', v)$ such that $(\ell', \alpha') \in \lambda(e_v)$, $\ell' \geq \alpha$, and $\alpha'$ is minimized, breaking ties in favor of labels with the largest departure time (such a node might not exists).

We observe that if NextHop$(\alpha^+,\ell^+,u^+)$ exists, if it exists, either coincides with $(\alpha',\ell', v)$ or it follows $(\alpha', \ell', v)$ in $\mathcal{B}_{p(v)}$ w.r.t.\ our ordering. 
Thus, if both  $(\alpha^+, \ell^+, u^+)$  and $(\alpha',\ell', v)$ exist, we define:
$\sigma_u(\alpha, \ell, \lambda) =
\begin{cases}
   (\alpha^+, \ell^+, u^+) & \text{if } \alpha^+ \le \ell'; \\
   (\alpha', \ell', v) & \text{if } \ell' < \alpha^+.
\end{cases}$ 

\noindent Here the condition $\alpha^+ \le \ell'$ is equivalent to the following: $\text{NextHop}(\alpha^+, \ell^+, u^+)$ exists and coincides with $(\alpha', \ell', v)$. If neither  $(\alpha^+, \ell^+, u^+)$  nor $(\alpha',\ell', v)$ exist, then $\sigma_u(\alpha, \ell, \lambda)$ is undefined and thus $(\alpha, \ell, \lambda)$ is a root.
Otherwise, $\sigma_u(\alpha, \ell, \lambda)$ is defined as the only node that exists among $(\alpha^+, \ell^+, u^+)$ and $(\alpha',\ell', v)$.
As usual, we drop the parameter $\lambda$ from $\F(\lambda)$, $\sigma_u(\alpha, \ell, \lambda)$, and $\mathcal{B}(\lambda)$ whenever $\lambda$ is clear from context.

\begin{figure}
    \centering
    \includegraphics[width=\textwidth]{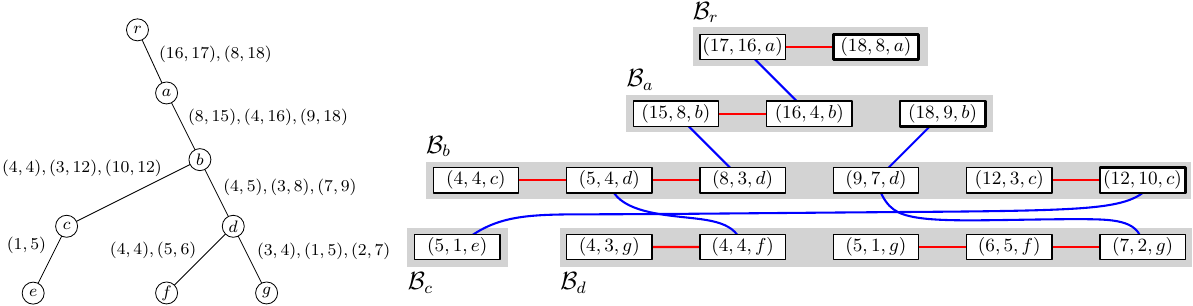}
    \caption{Left: a sample rooted temporal tree with latencies. Right: a representation of the forest $\F$ maintained by the data structure for temporal forests with latencies of \Cref{sec:temporal_forests_with_latencies}.}
    \label{fig:tree_latency}
\end{figure}

\noindent Our data structure is analogous to that of \Cref{sec:temporal_forests}, with the following exceptions:
\begin{itemize}
    \item each dictionary $D_v$ storing the pairs in $\lambda(e_v)$ now supports queries of the following form: given a range of values of interest for $\ell$ (resp.\ $\alpha$), return the minimum/maximum value of $\alpha$ (resp.\ $\ell$) w.r.t.\ all labels $(\ell, \alpha) \in \lambda(e_v)$ such that $\ell$ (resp.\ $\alpha$) is in the sought range (notice that value might not exist). This can be done in time $O(\log |\lambda(e_v)|)$ using, e.g., priority search trees \cite{McCreight85}, which require space $O(|\lambda(e_v)|)$.

    \item for each non-leaf node $v$ we say that a node $(\alpha, \ell, u) \in \mathcal{B}_v$ is a \emph{head} of $\mathcal{B}_v$ if $(\alpha, \ell, u)$ either has no parent in $\F$ or it is linked to its parent with a blue edge. We store a dictionary $\mathcal{H}_v$ that contains all heads of $\mathcal{B}_v$.
\end{itemize}

\noindent Since queries are analogous to the latency-free case, we only focus on label additions/deletions. 

\subparagraph*{An auxiliary procedure.}
The auxiliary procedure FixParent($\ell, \alpha, v$) for the case with latencies is similar to FixParent in the case without latencies, since it only uses the definition of $\sigma_v$ as before. The only difference is that that the execution of FixParent also needs to update $\mathcal{H}_{p(v)}$ taking into account the new parent of $(\ell, \alpha, v)$.

\subparagraph*{Label addition.}
To add label $(\ell, \alpha)$ on edge $e_v$, we first insert node $(\alpha, \ell, v)$ into $\F$, $(\ell, \alpha)$ into $D_v$, and $(\alpha, \ell, v)$ into $\mathcal{B}_{p(v)}$.
We find the maximum value $\ell^-$ of $\ell''$ among all pairs $(\ell'', \alpha'') \in \lambda(e_v)$ with $\alpha'' < \alpha$ using $D_v$.

Let $(\alpha_1, \ell_1, u_1), \dots, (\alpha_k, \ell_k, u_k)$ be
all the nodes in $\mathcal{B}_v$ such that $\ell^- < \alpha_i \le \ell$ sorted w.r.t.\ our order (see \Cref{fig:addition_latencies}). We observe that all these nodes are consecutive in $\mathcal{B}_v$. 
The parent of $(\alpha_k, \ell_k, u_k)$ becomes $(\alpha, \ell, v)$, while the parent of all $(\alpha_i, \ell_i, u_i)$ becomes $(\alpha_{i+1}, \ell_{i+1}, u_{i+1})$.
This requires updating all heads in $\mathcal{H}_v$ between $(\alpha_1, \ell_1, u_1)$ and  $(\alpha_{k-1}, \ell_{k-1}, u_{k-1})$, and can be in time $O(h \log M)$, where $h$ is the number of such heads.
This causes all the updated heads to be removed from $\mathcal{H}_v$, and $(\alpha_k, \ell_k, u_k)$ to become a new head (if that was not already the case). Next, we perform FixParent($\alpha, \ell, v$).
Finally, if the strict predecessor $(\alpha^*, \ell^*, v^*)$ of $(\alpha, \ell, v)$ in $\mathcal{B}_{p(v)}$ exists, we perform FixParent($\alpha^*, \ell^*, v^*$).

We now argue that the amortized time complexity of each label insertion is $O(\log M)$ in the incremental case using the accounting method. 
Let $c>0$ be a sufficiently large constant. We keep a coin of value at least $c \cdot H_M$, where $H_i = \sum_{j=1}^i \frac{1}{j}$ denotes the $i$-th harmonic number, on each node of $\F$ that is either a root or is linked to its parent with a blue edge.
When a new label is inserted, we pay up to $c M \cdot \frac{1}{M+1}$ for the increase in value of the existing coins so that each coin has value $c H_{M+1}$, and $c H_{M+1}$ for the coin on the new node $(\alpha, \ell, v)$ in $\F$.
We also add a coin of value $c H_{M+1}$ on each of  $(\alpha, \ell, v)$ and $(\alpha_k, \ell_k, u_k)$. 
Notice that each node of $(\alpha_1, \ell_1, u_1), \dots, (\alpha_{k-1}, \ell_{k-1}, u_{k-1})$ that changes the edge towards its parent either had no parent, or it was linked to its previous parent with a blue edge. Moreover, the new edges towards its parent must be red. This means that such a node had a coin of value $c H_{M+1}$ that we can use to pay for the cost of the rewiring.
All the other operations performed during a label addition cost $O(\log M)$ worst-case time. 

\subparagraph*{Label deletion.}
To delete the label $(\ell, \alpha)$ from edge $e_v$, we remove the node $(\alpha, \ell, v)$ from $\F$ along with all its incident edges, we delete $(\ell, \alpha)$ from $D_v$, and we delete $(\alpha, \ell, v)$ from $\mathcal{B}_{p(v)}$, and possibly from $\mathcal{H}_{p(v)}$.

If $v$ is not a leaf in $F$, let $(\alpha_1, \ell_1, u_1), \dots, (\alpha_k, \ell_k, u_k)$ be the nodes, in order, that had the NextHop equal to $(\alpha,\ell,v)$. Notice that such nodes induce a red path in $\F$, and that $(\alpha_k, \ell_k, u_k)$ was the unique blue child of $(\alpha, \ell, v)$ before the deletion.  

As a consequence of the deletion, some of the nodes in $(\alpha_1, \ell_1, u_1), \dots, (\alpha_k, \ell_k, u_k)$ will become heads of $\mathcal{B}_v$, and hence will have a blue edge towards their new parent in $\mathcal{B}_{p(v)}$. We find $(\alpha_k,\ell_k,u_k)$ in $O(\log M)$ worst-case time via binary search in $\mathcal{B}_v$ and we use it discover such heads as follows: 
Initially $(\alpha^*, \ell^*, u^*) = (\alpha_k, \ell_k, u_k)$, and  $z = \text{NextHop}(\alpha_k, \ell_k, u_k)$. We binary search $\mathcal{B}_v$ for the rightmost node $(\alpha', \ell', u')$ (w.r.t.\ our ordering) that precedes $(\alpha_k, \ell_k, u_k)$ and is such that $\text{NextHop}(\alpha', \ell', u') \neq z$. We mark node $(\alpha', \ell', u')$ as a head, and repeat the above procedure using $(\alpha^*, \ell^*, u^*) = (\alpha', \ell', u')$, and  $z = \text{NextHop}(\alpha', \ell', u')$.
We stop the above procedure as soon as $z$ precedes $(\alpha, \ell, v)$ in $B_{p(v)}$.
Then this requires $O(\log M)$ time, plus and additional $O(\log M)$ time per discovered head since we can check whether $\text{NextHop}(\alpha', \ell', u') \neq z$ in constant time. Indeed, calling $z = (\alpha_z, \ell_z, v)$, we can define $\ell_z^-$ as the largest value of $\ell''$ among the pairs $(\ell'', \alpha'') \in \lambda(e_v)$ with $\alpha'' < \alpha_z$ (such $\ell_z^-$ can be found by querying $D_v$). Then $\text{NextHop}(\alpha', \ell', u') \neq z$ iff $\ell' < \ell_z^-$.

We run FixParent on $(\alpha_k, \ell_k, u_k)$ and on all the marked heads, which also updates $\mathcal{H}_v$.

Finally, if the strict predecessor $(\alpha^-, \ell^-, v^-)$ of $(\alpha, \ell, v)$ in $\mathcal{B}_{p(v)}$ exists, we perform FixParent($\alpha^-, \ell^-, v^-$).

We now argue that the amortized time complexity of each label deletion is $O(\log M)$ in the decremental case using the accounting method. 
Let $c$ be sufficiently large constant. We keep a coin of value at least $c \cdot H_M$ on each node of $\F$ that is either a root or is linked to its parent with a red edge. Hence, we pay an amortized cost of $O(M \log M)$ at construction time. When a label is deleted, we pay $2 c H_M$ to add a coin on each of  $(\alpha^-, \ell^-, v^-)$, and $(\alpha_k, \ell_k, u_k)$. Moreover, since each marked head $(\alpha', \ell', u')$ had a red edge towards its parent before the deletion, we spend such a coin to pay for the execution of FixParent on $(\alpha', \ell', u')$.
All the other operations performed during a label deletion cost $O(\log M)$ worst-case time.

\begin{figure}[t]
    \includegraphics[width=.95\textwidth]{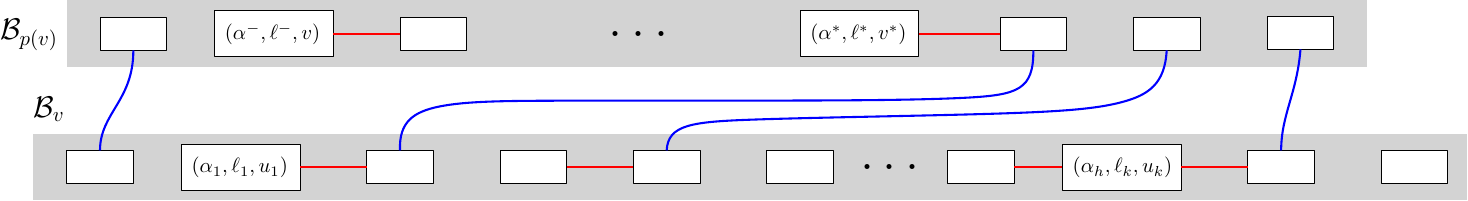}\\[.35cm]
    \includegraphics[width=.95\textwidth]{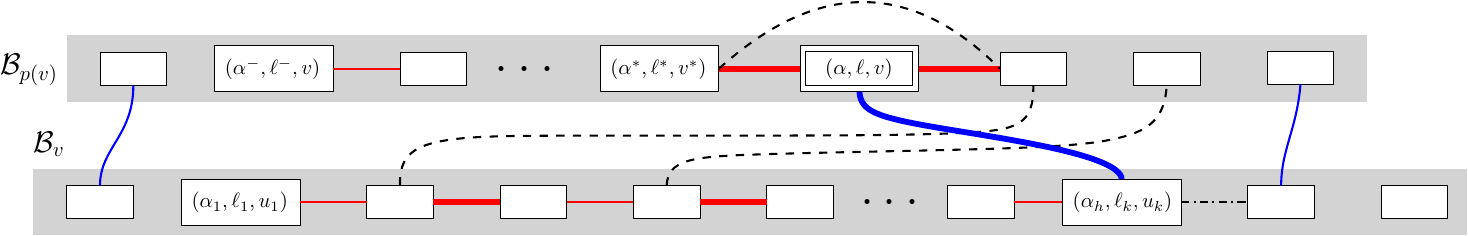}
    \caption{A qualitative representation of the changes resulting from the addition of label $(\ell, \alpha)$ on edge $e_v$.
    On the top: the blocks $\mathcal{B}_{p(v)}$ and $\mathcal{B}_{v}$ before inserting node $(\alpha, \ell ,v)$.
    On the bottom: the new state of $\mathcal{B}_{p(v)}$ and $\mathcal{B}_{v}$.
    Bold lines represent new edges, while dashed lines represent removed ones.}
    \label{fig:addition_latencies}
\end{figure}

\subparagraph*{Uniform latencies.}
A special case of temporal graphs with latencies is the one with \emph{uniform latencies}, where all time labels have the same latency.
In this case, the ordering defined on the nodes of $\F$ is exactly the same as the one used in \Cref{sec:temporal_forests}, for the case without latencies.
For this reason, the data structure we presented in this section guarantees a worst-case update time of $O(\log M)$ because only a constant number of nodes in $\F$ can change their parents.

\begin{corollary}\label{cor:uniform_latencies}
    Given a temporal forest of rooted trees with uniform latencies, it is possible to build in $O(n + M\log L)$ time, a data structure of linear size, $n$ is the number of vertices,  $L$ is the maximum number of labels on the same temporal edge, and $M$ is the total number of (not necessarily distinct) labels in the forest at the time of the operation.
    The data structure supports label additions/deletions, link/cut operations and addition/deletion of singleton vertices in $O(\log M)$ worst-case time per operation, EA and LD queries in $O(\log L \cdot \log M)$ worst-case time per operation, and reachability queries in $O(\log M)$ worst-case time per operation.
\end{corollary}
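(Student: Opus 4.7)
The plan is to verify that, under uniform latency, the data structure of \Cref{sec:temporal_forests_with_latencies} performs only a constant number of genuine rewirings in $\F$ per label update, thereby upgrading the amortized bounds of \Cref{thm:incremental_decremantal_latencies} to the claimed worst-case bounds. Let $d$ be the common latency, so every label $(\ell, \alpha)$ satisfies $\alpha = \ell + d$. Two immediate simplifications follow: since $\alpha$ is a monotone function of $\ell$, the lexicographic order on triples $(\alpha, \ell, v)$ used to order $\mathcal{B}_v$ coincides with the order on $(\ell, v)$ from \Cref{subsec:fixed_forest}, and the tie-breaking rule for NextHop becomes vacuous, so NextHop$(\alpha, \ell, u) = (\ell' + d, \ell', v)$, where $\ell'$ is the successor of $\alpha$ among the departure times of $\lambda(e_v)$.

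The key structural lemma to prove is: in uniform latency, for every non-leaf $v$, the nodes of $\mathcal{B}_v$ decompose into maximal contiguous red chains, one chain per gap in the sequence of departure times of $\lambda(e_v)$, with only the rightmost node of each chain being a head of $\mathcal{B}_v$. The proof is a direct case analysis of the definition of $\sigma_u$. As a consequence, in the label-addition procedure the consecutive subsequence $(\alpha_1, \ell_1, u_1), \dots, (\alpha_k, \ell_k, u_k)$ of nodes of $\mathcal{B}_v$ with $\ell^- < \alpha_i \le \ell$ lies entirely inside a single pre-existing red chain: if some $(\alpha_i)$ with $i<k$ were a head, its old NextHop would have a departure time $\ell^{\mathrm{old}}$ satisfying $\alpha_i \le \ell^{\mathrm{old}} < \alpha_{i+1} \le \ell$, producing a departure time strictly between $\ell^-$ and $\ell$ in $\lambda^{\mathrm{old}}(e_v)$ and contradicting the very definition of $\ell^-$. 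Hence $h=0$, only $(\alpha_k)$ is genuinely re-parented (to the freshly inserted node $(\alpha, \ell, v)$), and the procedure runs in worst-case $O(\log M)$ time.

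A symmetric analysis handles label deletions: all nodes of $\mathcal{B}_v$ whose old NextHop was the deleted node $(\alpha, \ell, v)$ share the same new NextHop $(\ell^+ + d, \ell^+, v)$, where $\ell^+$ is the strict successor of $\ell$ in the remaining departure times of $\lambda(e_v)$. Therefore the recursive binary search of the decremental procedure terminates after a single iteration, since any node of $\mathcal{B}_v$ strictly preceding $(\alpha_1)$ has NextHop with departure time at most $\ell^- < \ell$, which lexicographically precedes the position formerly occupied by $(\alpha, \ell, v)$ and triggers the stopping condition. Link, cut, and singleton operations reduce, exactly as in \Cref{sec:temporal_forests}, to a label addition/deletion combined with constant-many $O(\log M)$-time updates on the backing top tree representing $F$. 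The space, preprocessing-time, and query bounds are inherited unchanged from \Cref{thm:incremental_decremantal_latencies}.

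The main obstacle will be the structural lemma on red chains and the accompanying contradiction argument, both of which rely on the fact that under uniform latency the ordering of $\mathcal{B}_v$ by arrival times and the ordering of $\lambda(e_v)$ by departure times synchronize through the single additive shift $d$; once this synchronization is formalized, the remainder is routine bookkeeping of top-tree, dictionary, and priority-search-tree operations, each costing $O(\log M)$ worst-case time.
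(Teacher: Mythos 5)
Your proposal is correct and follows essentially the same route as the paper: the paper's own argument for \Cref{cor:uniform_latencies} is exactly the observation that under uniform latencies the ordering of the nodes of $\F$ coincides with the latency-free ordering of \Cref{subsec:fixed_forest}, so each label update changes the parent of only a constant number of nodes, turning the amortized bounds into worst-case ones. Your red-chain lemma and the $\ell^-$-contradiction (for additions) and common-NextHop argument (for deletions) are simply a detailed verification of that constant-rewiring claim, which the paper states without elaboration.
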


\bibliographystyle{plain}
\bibliography{references}

\clearpage

\clearpage
\appendix

\section{A data structure using the heavy-light decomposition}
\label{apx:heavy_light_decomposition}

Given a simple path $P$ and two vertices $x,y$ in $P$, we denote with $P[x:y]$ the subpath of $P$ that has $x$ and $y$ as endvertices.
Moreover, given two paths $P = \langle x_1, \dots, x_h\rangle$ and $P' = \langle y_1, \dots, y_k\rangle$ with $x_h = y_1$, we denote with $P \circ P'$ the concatenation of $P$ and $P'$, i.e., the (not necessarily simple) path $\langle x_1, \dots, x_h, y_2, \dots, y_k \rangle$.

Consider a temporal forest $F$ of rooted tress with $n$ vertices in total.
We can compute, in time $O(n)$, a collection $\mathcal{P} = \{P_1, P_2, \dots\}$ of ancestor-descendant paths in $F$ such that (i) each edge of $F$ belongs to exactly one path, and (ii) given two vertices $u$, $v$ of a tree $T$ in $F$ such that $v$ is a proper ancestor of $u$ or vice-versa, we can find, in $O(\log n)$ time, $k = O(\log n)$ vertices $u=x_1, \dots, x_k=v$ and $k-1$ paths $P_1, \dots, P_{k-1}$ such that $P_i$ contains  $x_{i}$ and $x_{i+1}$ and the unique path from $u$ to $v$ in $F$ can be written as $P_1[x_1 : x_2] \circ P_2[x_2 : x_3] \circ \dots \circ P_k[x_{k-1} : x_k]$. 

Such a decomposition $\mathcal{P}$ can be found by extending each path of a heavy-light decomposition \cite{SLEATOR1983362} of each tree $T$ in $F$ by one additional edge towards its root.
The heavy-light decomposition of a rooted tree $T$ is defined recursively as follows: let $P$ be a path in $T$ obtained by starting from the root and iteratively traversing an edge $(u,v)$ from the current vertex $u$ to a child $v$ of $u$ that maximizes the number of vertices in the subtree of $T$ rooted at $v$. The decomposition contains $P$ and all paths resulting from applying the above procedure to all (rooted) trees of the forest obtained from $T$ by deleting all vertices in $P$.

Our data structure for temporal trees stores $\mathcal{P}$ and the data structure $\mathcal{O}_i$ of \Cref{thm:static_path} for each path $P_i \in \mathcal{P}$.
Moreover, for each edge $e$ of $F$ we store a reference to the path $P_i$ containing $e$.
Finally, we store a linear-size oracle $\mathcal{O}_{LCA}$ capable of answering \emph{lowest common ancestor} (LCA) queries on $T$ in constant time \cite{BenderF00}.
Since the size of $\mathcal{O}_i$ is linear in the overall number of vertices and edge labels in $P_i$, the size of our data structure is $O(n + M)$.

\subparagraph{Handling label additions and deletions}

To handle the addition/deletion of a label $\ell$ to/from edge $e$, we find the path $P_i \in \mathcal{P}$  containing $e$ and we delegate the update operation to $\mathcal{O}_i$.

\subparagraph{Handling earliest arrival and latest departure queries}

To answer a query $\EA(u, v, t)$ when $u$ is an ancestor of $v$ or vice-versa, we can use property (ii) above to find a set of vertices $u=x_1, x_2, \dots, x_k=v$ that lie on the unique path $\pi$ between $u$ and $v$ in $T$, in this order, and a set of paths $P_1, \dots, P_{k-1} \in \mathcal{P}$ such that $\pi[x_i, x_{i+1}] = P_i[x_i, x_{i+1}]$.

We iteratively compute $t_i = \EA(u, x_i, t)$ in increasing order of $i$ until we find $t_k = \EA(u, v, t)$. Clearly $t_1 = t$. To compute $\EA(u, x_i, t)$ for $i>1$ we use the identity of \Cref{lemma:LD_eq_minusLA}:
\[
    \EA(u, x_i, t) = \EA(x_{i-1}, x_i, \EA(u, x_{i-1}, t)) = \EA(x_{i-1}, x_i, t_{i-1}).
\]
Since $x_{i-1}$ and $x_{i}$ belong to the same path $P_{i-1} \in \mathcal{P}$, we can find $\EA(x_{i-1}, x_i, t_{i-1})$ in time $O(\log M)$ by querying $\mathcal{O}_{i-1}$.

The overall time required is $O(k \cdot \log M) = O(\log n \cdot \log M)$.

To answer a generic query $\EA(u, v, t)$ we observe that any temporal path from $u$ to $v$ in $F$ must traverse the lowest common ancestor $w$ between $u$ and $v$ in $F$ and hence $\EA(u, v, t) = \EA(w, v, \EA(u, w, t))$. 
We can hence find $w$ in constant time using $\mathcal{O}_{LCA}$ and recast the query for $\EA(u, v, t)$ as two consecutive sub-queries: the first sub-query asks the earliest arrival time $t'$ between $u$ and $w$ in $T$, while the second subquery computes $\EA(w, v, t')$. Crucially, both $u$ and $v$ are descendants of $w$, hence the previous strategy can be used twice to find $\EA(u, v, t)$ in time $O(\log n \cdot \log M)$.

\clearpage

\section{Correctness of our upward EA query procedure}
\label{apx:ea_query_correctness}

This appendix is devoted to proving \Cref{lemma:correctness_ea_query}. We start with a preliminary lemma.

\begin{lemma}
    \label{lemma:path_query_correctness_1}
    Let $u, v \in V(F)$ with $v = p(u)$ (resp.\ $v=p(p(u))$), let $t \in \mathbb{Z} \cup \{-\infty\}$, and define $\ell = \succ(t, \lambda(e_u))$. 
    If (i) $\ell = +\infty$ or (ii) $\ell < +\infty$ and $\LA( (\ell, u), d_v - d_u - 1)$ does not exist, then $\EA(u, v, t) = +\infty$. Otherwise, $\LA( (\ell, u), d_v - d_u - 1) = (\ell^*, w)$ where $w=u$ (resp.\ $w=p(u)$).
\end{lemma}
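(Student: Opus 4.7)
We handle the two subcases separately. For the base case $v = p(u)$, the parameter $d_v - d_u - 1$ equals $0$, so $\LA((\ell, u), 0) = (\ell, u)$ whenever $(\ell, u) \in V(\F)$, i.e., whenever $\ell < +\infty$. This immediately yields the claimed form of the level ancestor with $w = u$ and makes condition (ii) vacuous. If instead $\ell = +\infty$, no label of $e_u$ is at least $t$, hence no temporal path from $u$ to $p(u) = v$ with departure at least $t$ exists, so $\EA(u, v, t) = +\infty$.

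In the case $v = p(p(u))$ the parameter equals $1$, and the plan is to trace the chain $(\ell_0, u_0) = (\ell, u), (\ell_1, u_1), \ldots$ of iterated parents of $(\ell, u)$ in $\F$, defined by $(\ell_{i+1}, u_{i+1}) = \sigma_{u_i}(\ell_i)$. Since $p(u)$ is a non-root vertex of $F$ (because $v$ exists), the non-root case of the definition of $\sigma$ applies to every node of $\mathcal{B}_{p(u)}$, so each step is either a red intra-block edge (taking $(\ell_i, u_i)$ to its strict successor in $\mathcal{B}_{p(u)}$) or a blue inter-block edge exiting $\mathcal{B}_{p(u)}$. By the blue branch of the definition of $\sigma$, any blue edge out of $(\ell_i, u_i) \in \mathcal{B}_{p(u)}$ lands precisely at $(\succ(\ell_i, \lambda(e_{p(u)})), p(u)) \in \mathcal{B}_v$, because $p(u_i) = p(u)$. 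Hence the first blue edge encountered in the chain lands at a node of the required form $(\ell^*, w)$ with $w = p(u)$, at weighted distance exactly $1$ from $(\ell, u)$, which coincides with $\LA((\ell, u), 1)$ since it is the deepest ancestor at weighted distance at least $1$. This settles the structural part of the claim.

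For the forward direction in the same case, $\ell = +\infty$ is immediate. Otherwise the non-existence of $\LA((\ell, u), 1)$ means that the chain uses no blue edge and eventually terminates at some $(\ell_m, u_m) \in \mathcal{B}_{p(u)}$ with $\sigma_{u_m}(\ell_m)$ undefined, which by the non-root definition of $\sigma$ forces both that $(\ell_m, u_m)$ has no strict successor in $\mathcal{B}_{p(u)}$ and that $\succ(\ell_m, \lambda(e_{p(u)})) = +\infty$. The key step is a stabilization argument along the chain: at every red step the case distinction in the definition of $\sigma$ gives $\ell_{i+1} = \ell_i^+ \le \succ(\ell_i, \lambda(e_{p(u)}))$ (allowing both sides to be $+\infty$), and combined with $\ell_i \le \ell_{i+1}$ and the monotonicity of $\succ(\cdot, \lambda(e_{p(u)}))$ this yields $\succ(\ell_{i+1}, \lambda(e_{p(u)})) = \succ(\ell_i, \lambda(e_{p(u)}))$. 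Iterating gives $\succ(\ell, \lambda(e_{p(u)})) = \succ(\ell_m, \lambda(e_{p(u)})) = +\infty$, so no label of $e_{p(u)}$ is at least $\ell$; since any temporal path from $u$ to $v$ must use $e_u$ followed by $e_{p(u)}$ with non-decreasing labels and departure at least $t$, no such path exists and $\EA(u, v, t) = +\infty$. The stabilization step is the main technical obstacle; once it is in place the case analysis on whether the chain from $(\ell, u)$ reaches a blue edge completes the proof.
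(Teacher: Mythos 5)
Your proof is correct and takes essentially the same route as the paper's: both follow the maximal red chain inside $\mathcal{B}_{p(u)}$ starting at $(\ell,u)$, use the stabilization identity $\succ(\ell_{i+1},\lambda(e_{p(u)}))=\succ(\ell_i,\lambda(e_{p(u)}))$ along red edges, and then distinguish whether the chain ends at a root of $\F$ (giving $\EA(u,v,t)=+\infty$) or at a blue parent in $\mathcal{B}_v$ (giving $w=p(u)$). The only difference is that the paper's proof additionally establishes $\ell^*=\EA(u,v,t)$ (via the reverse induction $\EA(x_i,v,\ell_i)=\EA(x_k,v,\ell_k)$), a slightly stronger fact than the literal statement which the base case of \Cref{lemma:correctness_ea_query} implicitly relies on; your argument yields it with one extra line, since the blue parent you identify is exactly $(\succ(\ell_k,\lambda(e_{p(u)})),p(u))$ and stabilization gives $\succ(\ell_k,\lambda(e_{p(u)}))=\succ(\ell,\lambda(e_{p(u)}))=\EA(u,v,t)$.
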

\begin{proof}
    The claim is trivially true when $\ell = +\infty$, hence in the rest of the proof we only consider $\ell < +\infty$.

    If $v=p(u)$ we have $\LA( (\ell, u), d_v-d_u-1 ) = \LA( (\ell, u), 0 ) = (\ell, u)$ and $\EA(u, v, t) = \succ(t, \lambda(e_u)) = \ell$.

    If $v = p(p(u))$, consider the longest path $\pi$ from $(\ell, u)$ to one of its ancestors in $\F$ that consists of only red edges.
    Let $(\ell_0, x_0) = (\ell, u), (\ell_1, x_1), \dots, (\ell_k, t_k)$ be the vertices of $\pi$, in order.
    We prove by reverse induction on $i=k, \dots, 0$  that $\EA(x_i, v, \ell_i) = \EA(x_k, v, \ell_k)$.
    The base case $i=k$ trivially holds, therefore we assume that the claim holds for $i+1$ and we prove that it also holds for $i < k$. 

    Let $w = p(u)$.
    We have $\sigma_{x_i}(\ell_i) = (\ell_{i+1}, x_{i+1})$, hence 
    $\ell_i \le \ell_{i+1} \le \succ( \ell_i, \lambda(e_w) )$ and we cab write:
    \[
        \EA(x_i, v, \ell_i) = \succ(\ell_i, \lambda(e_w))
        = \succ(\ell_{i+1}, \lambda(e_w))
        =\EA(x_{i+1}, \ell_{i+1}) = \EA(x_k, v, \ell_k).
    \]

    If $(x_k, \ell_k)$ has no parent in $\F$, then $\LA(u, d_v - d_u - 1) = \LA(u, 1)$ does not exist and $\succ(\ell_k, \lambda(e_w)) = +\infty$, which implies $\EA(u, v, \ell) = \EA(x_k, v, \ell_k) =\succ(\ell_k, \lambda(e_w)) = +\infty$.

    If $(x_k, \ell_k)$ has a parent $(w, \ell')$ in $\F$, then $\LA(u, d_v - d_u - 1) = \LA(u, 1) = (w, \ell')$. By definition of $\sigma_{x_k}$, $\ell' = \succ( \ell_k, \lambda(e_w) ) = \EA(x_k, v, \ell_k)$, hence $\EA(u, v, \ell) = \EA(x_k, v, \ell_k) = \ell'$.
\end{proof}

\noindent We are now ready to prove the following:

\correctnessEAquery*
\begin{proof}
    The proof is by induction on $d_v-d_u$. The base cases are $d_v - d_u = 1$ and $d_v - d_u = 2$ and are proved in \Cref{lemma:path_query_correctness_1}.

    For the inductive step we consider $d_v - d_u > 2$, we assume that the claim holds for all nodes $u',v'$ where $v'$ is a proper ancestor of $u'$ and  $d_{v'} - d_{u'} < d_v - d_u$, and we prove that it also holds for $u$ and $v$.

    Let $x = p(u)$ and $y = p(x)$. Defining $\ell' = \EA(u, y, t)$, 
    we have that either $\ell' = +\infty$ or $\ell' \in \lambda(e_x)$. In any case, $\EA(x, y, \ell')=\ell'$. Let  $\ell'' = \EA(x, v, \ell')$. We have:
    \begin{equation}
    \label{eq:ea_query_correctness_proof}
    \begin{alignedat}{2}
        \EA( u, v, t ) &= \EA(y, v, \EA( u, y, t )) = \EA(y, v, \ell')
        =  \EA(y, v, \EA(x, y, \ell')) \\
        &= \EA(x, v, \ell')
        =  \ell''.
    \end{alignedat}
    \end{equation}

    We first handle the case in which $\LA( (\ell, u), d_v-d_u-1 )$ exists.
    Then $\LA( (\ell, u), 1 )$ also exists and, by induction hypothesis, it is $(\ell' , x)$. We can then write: 
    \[
        \LA( (\ell, u), d_v-d_u-1 ) = \LA( \LA((\ell, u), 1), d_v-d_u-2 ) = \LA( (\ell', x), d_v - d_u -2),
    \]
    which guarantees the existence of $\LA( (\ell', x), d_v - d_u -2)$ and allows us to use the induction hypothesis once again to conclude that  $\LA( (\ell, u), d_v-d_u-1 )  = \LA( (\ell', x), d_v - d_u -2) = (\ell'', w)$, where $w$ is an ancestor of $x$ (and of $u$) and $p(w) = v$.
    
    We now consider the case in which $\LA( (\ell, u), d_v-d_u-1 )$ does not exist. Here the goal is to show that $\EA( u, v, t ) = +\infty$. In this case, the above arguments imply that either (i) $\LA( (\ell, u), 1 )$ does not exist, or (ii) $\LA( (\ell, u), 1 ) = (\ell', x)$  and $\LA( (\ell', x), d_v-d_u-2 )$ does not exist.
    
    If $\LA( (\ell, u), 1 )$ does not exist then, by induction hypothesis, $\EA( u, y, t) = +\infty$ and  $\EA( u, v, t ) = \EA(y, v, \EA( u, y, t )) = \EA(y,v,+\infty) = +\infty$.

    If $\LA( (\ell', x), d_v-d_u-2 )$ does not exist then, using Equation~\eqref{eq:ea_query_correctness_proof} and the induction hypothesis, we have $\EA( u, v, t ) = \ell'' = \EA(x, v, \ell') = +\infty$.
\end{proof}

\end{document}